	\newcolumntype{C}{>{\centering\arraybackslash}X} 
\newtheorem{theorem}{Theorem}[section]
\newtheorem{corollary}[theorem]{Corollary}
\newtheorem{proposition}[theorem]{Proposition}
\newtheorem{lemma}[theorem]{Lemma}
\newtheorem*{remark}{Remark}
\theoremstyle{definition}
\newtheorem{definition}[theorem]{Definition}
\title{Measuring tree balance using symmetry nodes -- a new balance index and its extremal properties}
\author{
  Sophie J. Kersting \\
  Institute of Mathematics and Computer Science \\ 
  University of Greifswald, Germany\\
  \texttt{sophie.kersting@uni-greifswald.de} \\
   \And
 Mareike Fischer \\
  Institute of Mathematics and Computer Science \\ 
  University of Greifswald, Germany\\
   \texttt{email@mareikefischer.de}
}
\begin{document}
\maketitle

\begin{abstract}
Effects like selection in evolution as well as fertility inheritance in the development of populations can lead to a higher degree of asymmetry in evolutionary trees than expected under a null hypothesis. 
To identify and quantify such influences, various balance indices were proposed in the phylogenetic literature and have been in use for decades. 

However, so far no balance index was based on the number of \emph{symmetry nodes}, even though symmetry nodes play an important role in other areas of mathematical phylogenetics and despite the fact that symmetry nodes are a quite natural way to measure balance or symmetry of a given tree. 

The aim of this manuscript is thus twofold: First, we will introduce the \emph{symmetry nodes index} as an index for measuring balance of phylogenetic trees and analyze its extremal properties. We also show that this index can be calculated in linear time. This new index turns out to be a generalization of a simple and well-known balance index, namely the \emph{cherry index}, as well as a specialization of another, less established, balance index, namely \emph{Rogers' $J$ index}.  Thus, it is the second objective of the present manuscript to compare the new symmetry nodes index to these two indices and to underline its advantages. In order to do so, we will derive some extremal properties of the cherry index and Rogers' $J$ index along the way and thus complement existing studies on these indices. Moreover, we used the programming language \textsf{R} to implement all three indices in the software package \textsf{symmeTree}, which has been made publicly available. 
\end{abstract}

\keywords{tree balance \and symmetry node \and cherry index \and phylogenetics}

\section{Introduction} \label{sec:intro}

For decades, various balance indices have been used and discussed as a tool to measure the degree of asymmetry in trees like genealogies or phylogenies that are, for instance, reconstructed based on genetical data (see, for example,  \cite{coronado_balance_2018, mir_new_2013, fusco_new_1995, fischer_extremal_2018, sackin_good_1972, coronado_sackins_2020}). Their application can give insights into diversification rate variation \cite{stam_does_2002}, influences like fertility inheritance and selection \cite{blum_matrilineal_2006, maia_effect_2004} or effects of different tree reconstruction methods  \cite{huelsenbeck_phylogenetic_1996, rohlf_accuracy_1990, heard_patterns_1992}. However, the concept of tree balance plays an important role in other scientific areas, too, for instance in computer science concerning search trees  \cite{nievergelt_binary_1973, walker_locally_1976, chang_efficient_1984, goos_balanced_1993, pushpa_binary_2007}. \par

Since research on balance indices and tree symmetry has been an active and vivid field for decades, it is even more surprising that one particular concept, namely that of symmetry nodes, has not yet been considered in this context. A \emph{symmetry node} of a rooted binary tree $T$ is just an inner vertex whose two maximal pendant subtrees are isomorphic in the sense that they have the same shape. As an example, consider tree $T$ depicted in Figure \ref{fig:exSymCherryDef}. The fact that this simple concept has not been used to formally quantify tree balance in phylogenetics is remarkable for three reasons: First of all, symmetry nodes are a well-known and useful concept in mathematical phylogenetics. They can, for instance, be used to calculate the number of phylogenetic trees induced by a given tree shape \cite[Corollary~2.4.3]{semple_phylogenetics_2003}. Second, symmetry nodes do what their name suggests: they measure symmetry in the sense that a node is a symmetry node if its maximal pendant subtrees are interchangeable as they have the same shape. So for rooted binary trees, symmetry and balance are closely related, if not identical, concepts. Thus, using symmetry nodes to describe tree balance is a rather natural idea. Last, there is a well-known and established balance index, namely the so-called cherry index \cite{mckenzie_distributions_2000}, which is closely related to the concept of symmetry nodes. Indeed, the symmetry nodes index, which we will introduce in the present manuscript, is a direct generalization of the cherry index. In mathematical phylogenetics, a \emph{cherry} of a rooted binary tree consists of two leaves that are adjacent to the same inner node. Said inner node, however, is then obviously also a symmetry node, as its two maximal pendant subtrees consist precisely of one leaf each, which makes them isomorphic. Thus, counting cherries is equivalent to counting a specific type of symmetry nodes, namely the ones whose maximal pendant subtrees consist of only one leaf each. Mathematically, it is of obvious interest to drop this restriction and to generalize the concept to \emph{all} symmetry nodes.

As the number of symmetry nodes plays an important role in mathematical phylogenetics, it is surprising that little is known about calculating it. By a slight modification of an algorithm by Colbourn and Booth \cite{colbourn_linear_1981}, we show in the present manuscript that the number of symmetry nodes can be calculated in linear time, which in turn also makes the new \emph{symmetry nodes index} $SNI$ index easily computable.

Note that in the literature, you can also find another index, which we call Rogers' $J$ index or $J$ index for short, based on the number of \emph{balanced} nodes \cite{rogers_central_1996}. An inner node of a rooted binary tree is said to be balanced if both its maximal pendant subtrees have the same number of descending leaves. Of course, every symmetry node is also balanced, but the converse need not be true (cf. Figures \ref{fig:exindices} and \ref{fig:SNIJdiff}). In this regard, symmetry nodes are a generalization of parent nodes of cherries, but a specialization of balanced nodes.

Thus, in summary, with the \textit{symmetry nodes index} $SNI$ we introduce a new and intuitive approach of measuring tree asymmetry based on the number of symmetry nodes. In accordance with many established  balance indices (like the Sackin \cite{sackin_good_1972}, Colless \cite{colless_review_1982} and Total Cophenetic \cite{mir_new_2013} indices) that assign high values to trees with a high degree of asymmetry we will define $SNI(T)$ as the number of interior nodes of a rooted binary tree $T$ that are \emph{not} symmetry nodes. This means that the lower $SNI(T)$ for some rooted binary tree $T$, the more balanced $T$ is in terms of the symmetry nodes index.

After analyzing combinatorial and extremal properties of the new index, $SNI$ will be compared to the modified cherry index $mCI$, which counts how many leaves of the tree $T$ are \emph{not} in a cherry,\footnote{The cherry index is usually simply defined as the number of cherries in a rooted binary tree. We could have considered this index and, analogously, define the symmetry nodes index as the number of symmetry nodes in a rooted binary tree. We refrained from this, however, because most established balance indices (including the $J$ index) assign the smaller values to more balanced trees and higher values to less balanced trees, and we wanted our new index to be in accordance with that. Thus, we have to compare it with the modified cherry index rather than the traditional one. Mathematically, this does not make a difference as the results can be directly translated back into the original setting.} as well as to the $J$ index, which counts the number of imbalanced nodes in a tree. In particular, we will fully characterize the shapes of minimal and maximal trees for $SNI$, $mCI$ and $J$, and we will also provide the minimal and maximal values of these indices, respectively.

The analyses concerning extremal values and trees for $mCI$ (and thus also for the classic cherry index) and $J$ nicely complement recent research on extremal properties of other classic balance indices \cite{fischer_extremal_2018,coronado_minimum_2020}. The results on the $J$ index can be directly derived from insights into the new $SNI$ index, as we can surprisingly show that, while these indices are substantially very different, their extremal behavior completely coincides.

Moreover, we will use our new insights into the (modified) cherry index in order to compare it to the symmetry nodes index. In particular, we prove that all  minimal trees regarding $SNI$ are also minimal regarding $mCI$. We also briefly show that the new symmetry nodes index differs from other established indices, like e.g. the famous Colless \cite{colless_review_1982}, Sackin \cite{sackin_good_1972} and Total Cophenetic indices \cite{mir_new_2013}. This shows that the new index really gives a totally new perspective on tree balance, even though its concept is rather intuitive. 

In order to make all our results accessible to users like e.g. phylogeneticists, we  implemented the new symmetry nodes balance index as well as $mCI$ and $J$ in an \textsf{R} package named \textsf{symmeTree}. This package has been made publicly available on Github \cite{symmeTree}. Some details concerning our software package can be found in the appendix.

\section{Preliminaries}\label{sec:prelim}

Before we can introduce and analyze the new symmetry nodes index, we first remind the reader of some basic  concepts, which will be used throughout this manuscript.

\subsection*{Graph theoretical and mathematical basics }

Recall that a \textit{tree} $T=(V,E)$ is a connected acyclic graph with a finite set of \textit{nodes} $V\neq\emptyset$ and a set of \textit{edges} $E\subseteq\{\{x,y\}|\ x,y\in V\}$. For such a tree $T$, $V^1$ denotes the set of \textit{leaves}, i.e. nodes with degree $d(v)\leq$1, whereas $\mathring{V}:=V\setminus V^1$ denotes the set of \textit{inner nodes}.  All trees $T$ in this manuscript are assumed to be  \textit{rooted} and \textit{binary}, i.e. all inner nodes, if any, other than one distinguished node $\rho$ called the \emph{root} of $T$, have degree 3, and the root has degree 0 (if $T$ has no inner node) or 2. Note that the only rooted binary tree without any internal nodes is the tree that consists of only one node, which for technical reasons is at the same time defined
to be the root and the only leaf of this tree (this is the only case where the root is not an internal node). Recall that for rooted and binary trees $T=(V,E)$ with $n$ leaves, we have: $|V|= 2n-1$,  $|\mathring{V}|= n-1$, $|E|= 2n-2$ and $|\mathring{E}|= n-2$, where $\mathring{E}$ denotes the set of inner edges of $T$, i.e. the set of edges which are not incident to a leaf vertex.

Throughout this manuscript, we will group trees by their number of leaves $n = |V^1| \in \mathbb{N}$. We will also refer to the number of leaves $n$ as the \textit{size} of a tree.
Furthermore, for technical reasons all tree edges in this
manuscript are implicitly assumed to be \textit{directed} from the root to the leaves.
Thus, for an edge $e = (u, v)$ of $T$, it makes sense to refer to $u$ as the direct
ancestor or parent of $v$ (and $v$ as the direct descendant or child of $u$). More
generally, when there is a directed path from $\rho$ to $v$ visiting $u$, $u$ is called an
\emph{ancestor} of $v$ (and $v$ a \emph{descendant} of $u$). 
Two leaves $v$ and $w$ are said to form a
\textit{cherry} if $v$ and $w$ have the same parent. Note that every rooted binary tree with at least 2 leaves has at least one cherry. Let $c(T)$ denote the number of cherries in a rooted binary tree $T$.

Moreover, recall that a rooted binary tree $T$ can be decomposed into its two maximal pendant subtrees $T_a$ and $T_b$ rooted at the direct descendants of $\rho$, which is often referred to as the \textit{standard decomposition} of $T$. We denote this by $T = (T_a, T_b)$ (cf. Figure \ref{fig:exSymCherryDef}). 

Another graph theoretical concept we require is that of graph isomorphisms. Recall that two rooted binary trees $T_1=(V_1,E_1)$ and $T_2=(V_2,E_2)$ are called \textit{isomorphic} if there is a bijection $f:V_1 \rightarrow V_2$ with $\{ f(u), f(v)\} \in E_2 \Leftrightarrow  \{ u, v\} \in E_1$ and with  $f(\rho_1)=\rho_2$.

Another concept crucial for the present manuscript is that of symmetry nodes: In a rooted binary tree an interior node $u$ with children $u_1$ and $u_2$ is called a \textit{symmetry node} if its two maximal pendant subtrees are isomorphic. Thus, the simplest symmetry node is the parent node of a cherry because the two pendant subtrees are single leaves and therefore have the same tree shape. 
The number of symmetry nodes in a rooted binary tree $T$ will be referred to as $s(T)$. For our manuscript, it is crucial to note that for a given inner node of a rooted binary tree $T$, we can decide in linear time if it is a symmetry node or not. This is due to the fact that -- while the complexity of the general graph isomorphism problem is to-date still unknown -- tree isomorphism can be decided in linear time \cite{colbourn_linear_1981}. Some details on a possible implementation can be found in the appendix of this manuscript. 

A concept that is closely related to symmetry nodes is that of balanced nodes: An inner vertex $v$ of a rooted binary tree $T$ is called \emph{balanced} if its maximal pendant subtrees rooted at the children $v_a$ and $v_b$ of $v$ have the same number of leaves. The number of balanced nodes of $T$ is often referred to as $b(T)$. So formally, we have $b(T)=\sum\limits_{v \in \mathring{V}}\delta(n_{v_a},n_{v_b})$, where $n_{v_a}$ and $n_{v_b}$ denote the number of leaves in the subtrees rooted at $v_a$ and $v_b$, respectively, and  $\delta(n_{v_a},n_{v_b})=\begin{cases} 1 & \mbox{ if $n_{v_a}=n_{v_b}$}\\ 0 & \mbox{ else.}\end{cases}$.
Note that while all symmetry nodes are necessarily balanced (which implies  $s(T)\leq b(T)$), the converse need not be true. In order to see this, consider for instance the rightmost tree in Figure \ref{fig:exindices} as well as tree $T_2$ in \ref{fig:SNIJdiff}.

\begin{figure}
	\centering
	\includegraphics[width=0.6\textwidth]{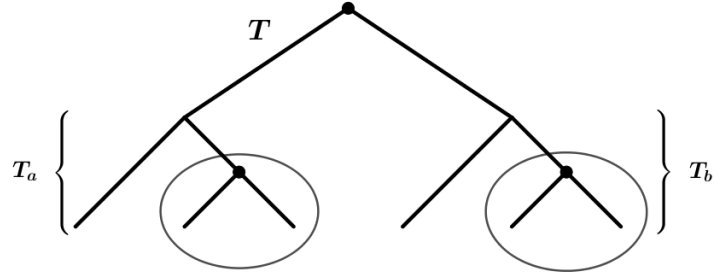}
	\caption{Tree $T$ of size 6 and with standard decomposition $(T_a,T_b)$ has three symmetry nodes (marked with bold dots) and two cherries (indicated with circles).}
	\label{fig:exSymCherryDef}
\end{figure}

Last but not least we have a look at the \textit{Wedderburn Etherington numbers} 
$WE(n)$ \cite[Sequence A001190]{OEIS}, where $WE(n)$ corresponds to the number of rooted binary trees with $n$ leaves. For example, $WE(4)=2$ implies that there are precisely two different rooted binary trees with $4$ leaves.
Note that the Wedderburn Etherington numbers can be defined recursively, starting with $WE(1)=1$ as follows:
\[ WE(n)= \begin{cases}WE\left( \frac{n}{2}\right) \cdot \frac{WE\left( \frac{n}{2}\right)+1}{2} + \sum\limits_{i=1}^{\frac{n}{2}-1} WE(i)\cdot WE(n-i)& \mbox{ if $n$ is even,} \\
\sum\limits_{i=1}^{\frac{n-1}{2}} WE(i)\cdot WE(n-i)& \mbox{ if $n$ is odd.}
\end{cases}\]

It is known that $WE(n)$ can also be interpreted as the number of possible ways to insert parentheses in the term $x^n$ when multiplication is commutative but not associative \cite[Sequence A001190]{OEIS}.  
For example we have $WE(5)=3$ with $x(x(x(xx)))$, $x((xx)(xx))$ and $(x(xx))(xx)$ as the three possible commutative expressions. 
Finally, recall that the \emph{binary weight} $wt(n)$ of a natural number $n$ is the number of 1's in the binary expansion of $n$ \cite[Sequence A000120]{OEIS}. In particular, let $N:=\max\limits_{k \in \mathbb{N}_0}\{k:2^k\leq n\}$, and let $\sum\limits_{i=0}^N 2^in_i$, then $wt(n)=\sum\limits_{i=0}^N n_i$. We will need this notion later on to characterize all maximally balanced trees.

\subsection*{Special trees}
There are some trees that deserve special attention: Let $T^{cat}_{n}$ denote the so-called \textit{caterpillar tree} with $n$ leaves, which (in case that $n \geq 2)$ is  defined as the unique binary rooted tree that contains only one cherry. Intuitively, the caterpillar is often thought of as the most asymmetrical of all trees.\par
In contrast, the \textit{fully balanced tree} of height $k$, $T^{fb}_{k}$, is often regarded as the most balanced tree. It is only defined for $n=2^k$ with $k \in \mathbb{N}$, and it is defined as the unique rooted binary tree with $2^k$ leaves for which \emph{all} inner vertices are symmetry nodes. Note that whenever $n$ is not a power of 2, there is no rooted binary tree of which all inner vertices are symmetry nodes.

\subsection*{Balance indices}
We are now in the position to introduce the central concept of this manuscript, namely the symmetry nodes index as a new approach for measuring tree balance.
\begin{definition}
The \textit{symmetry nodes index} of a rooted binary tree $T$ is defined as the number of interior nodes which are \emph{not} symmetry nodes, i.e. $SNI(T)\coloneqq (n-1)-s(T)$. 
\end{definition}
In accordance with established  balance indices, like e.g. Colless and Sackin, we consider trees with small values as more balanced than trees with higher  values. 
Recall that the \textit{cherry index} is defined as the number of cherries $CI(T)\coloneqq c(T)$, which rewards a high degree of symmetry (a high number of cherries) with high index values. As we will compare the symmetry nodes index with the cherry index throughout this manuscript, we will use the following modified version of the cherry index for simplicity. 

\begin{definition}
The \textit{modified cherry index} of a rooted binary tree $T$ is defined as the number of leaves which are \emph{not} in a cherry, i.e. $mCI(T)\coloneqq n-2 \cdot c(T)$. 
\end{definition}

Note that all combinatorial properties of $c$ can be easily translated into properties of $mCI$ and vice versa. For instance, all trees that minimize $c$ maximize $mCI$. 

The last balance index which we will analyze in-depth in the present manuscript was introduced by Rogers in 1996 \cite{rogers_central_1996}. We refer to it as Rogers' $J$ index or simply $J$ index for short.\footnote{Note, however, that the notation $U$ instead of $J$ as a shorthand for \textit{unbalanced} is also sometimes used in the literature when referring to this index \cite{hayati_Diss_new_2019}.} 

\begin{definition}
The \textit{$J$ index} of a rooted binary tree $T$ is defined as the number of interior nodes which are \emph{not} balanced, i.e. $J(T)\coloneqq (n-1)-b(T)$. 
\end{definition}

Examples for the symmetry nodes index as well as the modified cherry index and the $J$ index are depicted in Figures \ref{fig:exindices} and \ref{fig:SNIJdiff}.

\begin{figure}
	\centering
	\includegraphics[width=0.95\textwidth]{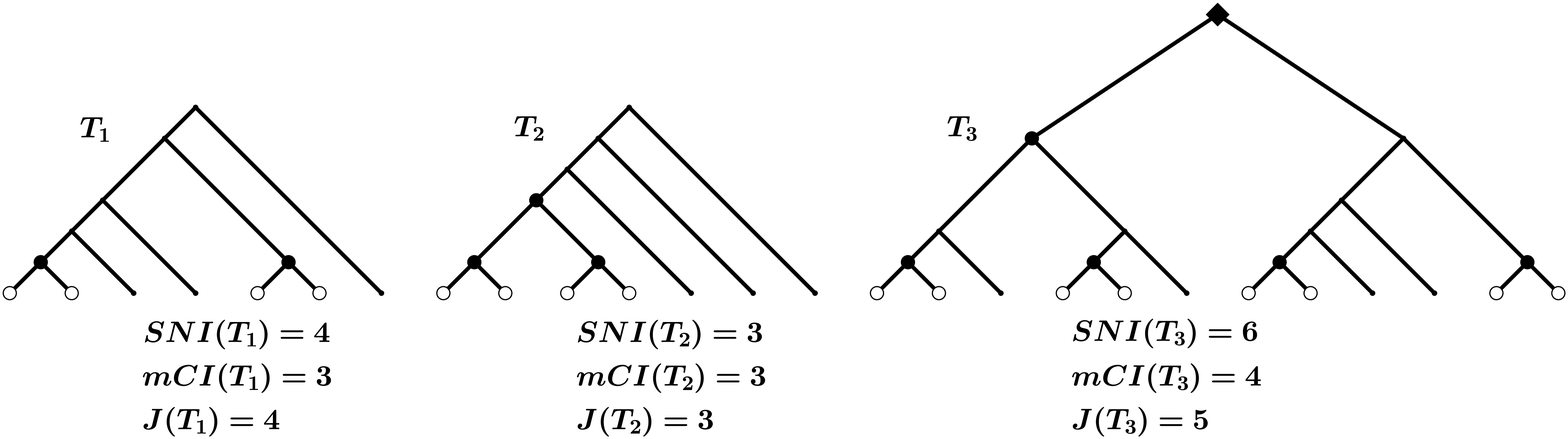}
	\caption{Example trees with $n=7$ and $10$ leaves. The symmetry nodes are marked with bold dots whereas the interior nodes that are balanced but not symmetry nodes are marked as diamonds. Leaves in a cherry are indicated with white dots.}
	\label{fig:exindices}
\end{figure}

In the following, we denote by $\min_{bal}(n)$ and $\max_{bal}(n)$ the minimal and maximal values of balance index $bal \in \{CI, mCI, SNI, J\}$ for rooted binary trees with $n$ leaves.

\section{Results}

It is the aim of this manuscript to analyze the extremal properties of the symmetry nodes index $SNI$ and, in particular, to characterize all SNI minimal and maximal trees. 

However, before we consider these trees, we want to state the following theorem, which is important as it shows that $SNI$ can be calculated in linear time. The proof of this statement (including an explicit algorithm)  can be found in the appendix.

\begin{theorem}\label{thm:linearTime}
Let $n \in \mathbb{N}$ and let $T$ be a rooted binary tree with $n$ leaves. Then, $SNI(T)$ can be calculated in  $\mathcal{O}(n)$ time using $\mathcal{O}(n)$ memory space.
\end{theorem}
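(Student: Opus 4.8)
The plan is to reduce the computation of $SNI(T)$ to the computation of the number of symmetry nodes $s(T)$, and then to compute $s(T)$ by a single bottom-up pass that simultaneously solves the subtree isomorphism problem. Since $SNI(T) = (n-1) - s(T)$ and the number of inner nodes $|\mathring{V}| = n-1$ is available directly from the leaf count, which can be obtained by one traversal in $\mathcal{O}(n)$ time, everything hinges on computing $s(T)$ within the same budget. The obvious route -- testing for each of the $n-1$ inner nodes whether its two maximal pendant subtrees are isomorphic, each test costing linear time -- only yields an $\mathcal{O}(n^2)$ bound, so a global strategy is needed.

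First I would assign to every node of $T$ a canonical integer code (a \emph{certificate}) with the property that two subtrees receive the same code if and only if they are isomorphic. For rooted binary trees this can be done cleanly in a bottom-up traversal that handles each node only after both of its children: every leaf receives a fixed base code, and an inner node $u$ with children $u_1, u_2$ is assigned a code that depends only on the unordered pair $\{\,\mathrm{code}(u_1), \mathrm{code}(u_2)\,\}$, obtained by ordering the two child codes canonically and then mapping the resulting pair to a fresh integer. With such codes in hand, deciding whether a given inner node is a symmetry node becomes an $\mathcal{O}(1)$ comparison: $u$ is a symmetry node precisely when $\mathrm{code}(u_1) = \mathrm{code}(u_2)$. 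Thus $s(T)$ is obtained by summing a single equality test over all inner nodes during the same pass, and $SNI(T)$ follows immediately.

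The key technical step -- and the main obstacle -- is to guarantee that the assignment of fresh integer codes to the distinct child-pairs can be carried out in \emph{total} linear time and space over the whole tree, rather than with a potentially superlinear overhead. This is exactly what the linear-time tree-isomorphism algorithm of Colbourn and Booth \cite{colbourn_linear_1981} provides: processing the nodes level by level, the child-code pairs occurring at each level are grouped by radix (bucket) sort and assigned consecutive integers, and the crucial observation is that the combined size of all these sorting tasks, taken over all levels, is bounded by the number of nodes, so the total cost stays $\mathcal{O}(n)$. The memory needed to store one code per node together with the auxiliary bucket arrays is likewise $\mathcal{O}(n)$.

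Only a slight modification of this algorithm is required for our purposes. Since the certificates computed by Colbourn and Booth already determine, for each inner node, whether its two children carry identical codes, we merely augment their procedure with a counter that is incremented whenever such an equality is detected. This adds at most a constant amount of work per node and therefore leaves the $\mathcal{O}(n)$ time and $\mathcal{O}(n)$ space bounds intact, yielding $s(T)$ and hence $SNI(T) = (n-1) - s(T)$ within the claimed complexity. The full algorithm and its correctness proof are deferred to the appendix.
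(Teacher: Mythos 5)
Your proposal is correct and takes essentially the same approach as the paper: a depthwise, bottom-up variant of Colbourn--Booth's algorithm that assigns integer codes (the paper's \emph{i-numbers}) by bucket-sorting the canonically ordered child-code pairs level by level, detects a symmetry node via an $\mathcal{O}(1)$ equality test on its children's codes, and returns $SNI(T)=(n-1)-s(T)$ in total $\mathcal{O}(n)$ time and space. The one slight imprecision is your claim that equal codes characterize isomorphism \emph{globally} --- with per-level assignment they are only depthwise unique --- but, as the paper itself notes, this suffices because the two children of any node always share the same depth.
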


Before we can finally turn our attention to extremal properties of $SNI$, we need one simple but crucial lemma, which will be needed throughout this manuscript. 

\begin{lemma}\label{lem:recursion}
Let $T=(T_a,T_b)$ be a rooted binary tree with root $\rho$. Then we have: \[SNI(T)=SNI(T_a)+SNI(T_b) + \delta_T,\] where $\delta_T=\begin{cases} 0 \mbox { if $\rho$ is a symmetry node}, \\ 1 \mbox{ else.} \end{cases}$ 
\end{lemma}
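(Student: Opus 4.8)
The plan is to reduce everything to the defining identity $SNI(T) = (n-1) - s(T)$ and to track separately how the inner-node count $n-1$ and the symmetry-node count $s(T)$ behave under the standard decomposition $T = (T_a, T_b)$. Writing $n_a$ and $n_b$ for the numbers of leaves of $T_a$ and $T_b$, so that $n = n_a + n_b$, I would first record the elementary "size" bookkeeping: the inner nodes of $T$ are precisely the root $\rho$ together with the inner nodes of $T_a$ and those of $T_b$, and these three sets are disjoint. Hence $(n-1) = 1 + (n_a - 1) + (n_b - 1)$, with the isolated $1$ accounting for $\rho$.

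The key step is to establish the analogous decomposition for the symmetry-node count, namely $s(T) = s(T_a) + s(T_b) + (1 - \delta_T)$. For this I would argue that whether a \emph{non-root} inner node $v$ is a symmetry node is a purely local property: it depends only on the isomorphism types of the two maximal pendant subtrees hanging below $v$, and both of these lie entirely inside whichever of $T_a$, $T_b$ contains $v$. Consequently the subtree of $T$ rooted at $v$ coincides with the subtree of $T_a$ (respectively $T_b$) rooted at $v$, so $v$ is a symmetry node of $T$ if and only if it is a symmetry node of the maximal pendant subtree containing it. This shows that the symmetry nodes of $T$ distinct from $\rho$ form exactly the disjoint union of the symmetry nodes of $T_a$ and of $T_b$. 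Finally, by definition $\rho$ itself is a symmetry node precisely when its two maximal pendant subtrees $T_a$ and $T_b$ are isomorphic, which by definition of $\delta_T$ contributes the term $1 - \delta_T$ to $s(T)$.

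Combining the two decompositions and substituting into $SNI(T) = (n-1) - s(T)$, the $+1$ coming from the root in the inner-node count and the $-(1 - \delta_T)$ coming from $\rho$ in the symmetry count collapse to $+\delta_T$, while the remaining terms regroup as $[(n_a-1) - s(T_a)] + [(n_b-1) - s(T_b)] = SNI(T_a) + SNI(T_b)$, yielding the claimed recursion. I do not expect any genuine obstacle here; the only point requiring care is the locality claim for non-root symmetry nodes, i.e. making explicit that the induced subtree rooted at such a $v$ is unaffected by the surgery that splits $T$ into $T_a$ and $T_b$, so that the property of being a symmetry node is inherited in both directions and no symmetry nodes are gained or lost away from the root.
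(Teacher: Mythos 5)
Your proof is correct and takes essentially the same approach as the paper, which likewise partitions the interior node set $\mathring{V}(T)$ into $\{\rho\}$, $\mathring{V}(T_a)$ and $\mathring{V}(T_b)$ and counts the non-symmetry nodes in each part. You simply make explicit (via the identity $SNI(T)=(n-1)-s(T)$ and the locality of the symmetry-node property for non-root nodes) the bookkeeping that the paper's one-line proof leaves implicit.
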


\begin{proof}
 The stated equality is a direct consequence of partitioning the set of interior nodes $\mathring{V}(T)$ into $\{\rho\}$, $\mathring{V}(T_2)$ and $\mathring{V}(T_1)$ and counting the number of non-symmetric nodes in each of these sets individually.
\end{proof}

\subsection{Extremal properties of the symmetry nodes index}\label{sec:sym}

In this subsection, we fully characterize trees with minimal and maximal symmetry nodes index. In particular, we show that for a given tree size $n$, the tree maximizing $SNI$ is unique, whereas for most values of $n$, multiple trees can be $SNI$ minimal. We also explicitly state the minimal and maximal values of $SNI$. We start with the maximum.

\subsubsection{Maximal value and maximal tree}

\begin{theorem} \label{thm:catmax}Let $n \in \mathbb{N}_{\geq 2}$. Then, we have: $max_{SNI}(n)=n-2$, and this value is uniquely achieved by the caterpillar tree. 
\end{theorem}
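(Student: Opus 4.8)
The plan is to reduce the claim to a statement about symmetry nodes via the definition $SNI(T) = (n-1) - s(T)$. Since the number of interior nodes $n-1$ is fixed for a given tree size, maximizing $SNI$ over all rooted binary trees with $n$ leaves is equivalent to minimizing the number of symmetry nodes $s(T)$. Thus the theorem amounts to showing that $s(T) \geq 1$ for every such tree, with equality holding exactly for the caterpillar.

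For the upper bound, I would use the fact recalled in the Preliminaries that every rooted binary tree with $n \geq 2$ leaves has at least one cherry. The parent of a cherry is a symmetry node, since both of its maximal pendant subtrees are single leaves and hence isomorphic; therefore $s(T) \geq 1$, which immediately gives $SNI(T) = (n-1) - s(T) \leq n-2$. To prepare for uniqueness, I would sharpen this to $s(T) \geq c(T)$: distinct cherries have distinct parents (a cherry-parent has both of its children among the leaves, so two different cherries cannot share a parent), and each such parent is a symmetry node, so the cherry-parents alone already contribute $c(T)$ distinct symmetry nodes.

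To finish, I would first verify that the caterpillar attains the bound. Writing the interior nodes of $T^{cat}_{n}$ as a path $v_1 = \rho, v_2, \dots, v_{n-1}$, the node $v_{n-1}$ is the parent of the unique cherry and hence a symmetry node, while every earlier $v_i$ has a single leaf on one side and a caterpillar on $n-i \geq 2$ leaves on the other, so its two maximal pendant subtrees have different sizes and cannot be isomorphic. Hence $s(T^{cat}_{n}) = 1$ and $SNI(T^{cat}_{n}) = n-2$. For uniqueness, suppose $SNI(T) = n-2$, i.e. $s(T) = 1$. Combining $s(T) \geq c(T)$ with $c(T) \geq 1$ forces $c(T) = 1$, and by definition the caterpillar is the unique rooted binary tree with exactly one cherry, so $T = T^{cat}_{n}$.

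I expect the bound and the evaluation at the caterpillar to be routine; the only step requiring a little care is the inequality $s(T) \geq c(T)$, which is what powers the uniqueness argument and must rule out non-caterpillar trees with a single symmetry node. An alternative, slightly longer route would be induction on $n$ using Lemma~\ref{lem:recursion}, applying the recursion to the standard decomposition $T = (T_a, T_b)$ and analyzing the cases according to the sizes of $T_a$ and $T_b$; I would prefer the direct cherry-counting argument above as it avoids case analysis.
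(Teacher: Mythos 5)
Your proof is correct and follows essentially the same route as the paper's: bound $SNI$ via the existence of a cherry, verify $s(T^{cat}_n)=1$ by noting every non-cherry-parent interior node has pendant subtrees of different sizes, and get uniqueness from the caterpillar being the unique tree with one cherry. The only difference is that you make explicit the inequality $s(T)\geq c(T)$ (distinct cherries have distinct parents), which the paper leaves implicit in the phrase \enquote{as all other trees have more cherries} --- a slight gain in rigor, not a different argument.
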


\begin{proof}
Every rooted binary tree with at least two leaves has at least one cherry and therefore at least one symmetry node. The caterpillar tree is the unique tree with only one cherry. The parent of this cherry is also the only symmetry node in $T_n^{cat}$, because each other inner vertex has two maximal pendant subtrees of different sizes (so they cannot be isomorphic). As all other trees have more cherries, the caterpillar is the \emph{only} rooted binary tree with only one symmetry node, and no tree has fewer symmetry nodes. Thus, we have $max_{SNI}(n)=SNI(T_n^{cat})=|\mathring{V}(T_n^{cat})|-1=n-2$. The latter equality is due to the fact that a rooted binary tree has $|\mathring{V}|=n-1$ inner vertices. 
\end{proof}

Next, we turn our attention from maxima to minima.

\subsubsection{Minimal value and minimal trees} \label{sec:symmin}

Theorem \ref{thm:catmax} shows that characterizing both the unique tree that maximizes the symmetry nodes index as well as the maximal value achieved by it is rather simple. Similar to other balance indices, for the symmetry nodes index it turns out that the minimum is more involved. In particular, we will see that the minimal tree is not unique for all values of $n$. However, we will fully characterize all minimal trees subsequently. In order to do so, we first need to introduce a relevant class of rooted binary trees, namely that of  \emph{rooted binary weight trees}.

\begin{definition}  \label{def:Ttilde}
Let $n \in \mathbb{N}$ and let $n=\sum\limits_{i=0}^N n_i2^i$ be the binary expansion of $n$, where $N=\max\limits_{k\in \mathbb{N}_0}\{k:2^k\leq n\}$. In particular, for all $i=0,\ldots, N$ we have $n_i\in \{0,1\}$.  Now we define the set $S_n$ as follows:
\[S_n:=\{T_i^{fb}:n_i=1\},\]

i.e. this set contains a fully balanced tree of height $i$ for each $i$ for which $n_i=1$. Then, the set of \emph{rooted binary weight trees} for $n$, which we will denote by $\mathcal{T}_n$, contains all trees that can be constructed by taking an arbitrary rooted binary tree $T^{top}_n$ with $wt(n)$ leaves and replacing the leaves with the trees of $S_n$ (in any order). 
\end{definition}

Let $\widetilde{T}$ be a rooted binary weight tree. Then, $\widetilde{T}$ can be thought of as a `top tree' (the one with $wt(n)$ leaves) with fully balanced pendant subtrees. This general construction as well as an example are depicted in Figure \ref{fig:symmintrees}. 

\begin{figure}
	\centering
	\includegraphics[width=0.8\textwidth]{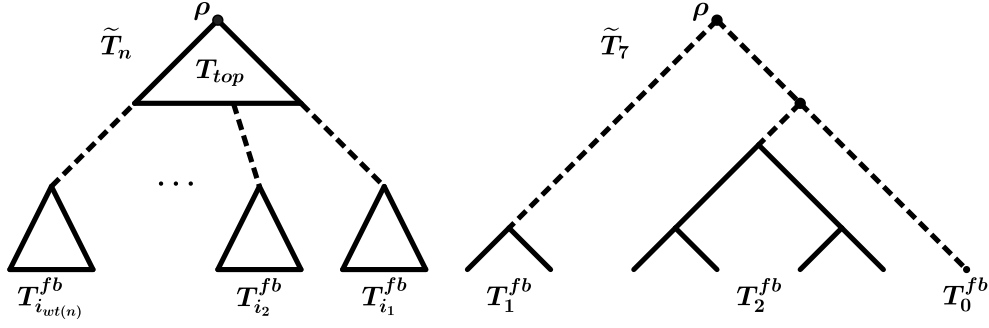}
	\caption{Schematic visualization of $\widetilde{T}_n$ and $\widetilde{T}_{7} \in \mathcal{T}_{7}$ as an example for a rooted binary weight tree with $n=7$ leaves. The edges of the top tree are shown with dashed lines, respectively. Note that as $n=7=2^2+2^1+2^0$, we have $S_{7}=\{T_2^{fb},T_1^{fb}, T_0^{fb}\}$ according to Definition \ref{def:Ttilde}, and this set consists of the subtrees attached to the top tree. Note that we have $SNI(\widetilde{T}_{7})=2$ (the only non-symmetry nodes of $\widetilde{T}_{7}$ are highlighted with bold dots), which equals $wt(7)-1$.}
 \label{fig:symmintrees}
\end{figure}

We are now in the position to state the first main theorem of this section, which gives a full characterization of trees minimizing the symmetry nodes index.
 
\begin{theorem} \label{the:min_sym} Let $n\in \mathbb{N}$ and let $T$ be a rooted binary tree with $n$ leaves. Then, we have   $min_{SNI}(n)=wt(n)-1$. Moreover, we have $SNI(T)=min_{SNI}(n)=wt(n)-1$ if and only if $T\in \mathcal{T}_n$, i.e. if and only if $T$ is a rooted binary weight tree.
\end{theorem}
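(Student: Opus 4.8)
The plan is to establish the value $min_{SNI}(n) = wt(n) - 1$ and the characterization of the minimizers simultaneously, using the recursive decomposition of Lemma \ref{lem:recursion} as the main engine and the subadditivity of the binary weight function as the main arithmetic ingredient. I would first dispatch the upper bound by computing $SNI$ directly on the construction. The inner nodes of a weight tree $\widetilde{T} \in \mathcal{T}_n$ split into those lying inside one of the fully balanced pendant subtrees from $S_n$ and those belonging to the top tree $T^{top}_n$. The former are all symmetry nodes by the defining property of $T^{fb}_i$ and contribute nothing to $SNI$. For the latter I would argue that none is a symmetry node: the pendant trees of $S_n$ have pairwise distinct leaf numbers $2^i$ (one for each bit $n_i = 1$), and by uniqueness of binary representation any two disjoint nonempty subsets of these powers have different sums, so at every top-tree node the two maximal pendant subtrees carry different numbers of leaves and cannot be isomorphic. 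Since $T^{top}_n$ has $wt(n)$ leaves and hence $wt(n)-1$ inner nodes, we get $SNI(\widetilde{T}) = wt(n)-1$, which yields $min_{SNI}(n) \leq wt(n)-1$.

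For the reverse inequality I would induct on $n$, with base case $n = 1$ (where $SNI = 0 = wt(1)-1$) immediate. For $n \geq 2$ write $T = (T_a,T_b)$, where $T_a$ has $a$ leaves and $T_b$ has $b$ leaves with $a + b = n$, and distinguish whether $\rho$ is a symmetry node. If $\rho$ is \emph{not} a symmetry node, Lemma \ref{lem:recursion} and the inductive lower bounds give
\[
SNI(T) = SNI(T_a) + SNI(T_b) + 1 \geq (wt(a)-1) + (wt(b)-1) + 1 = wt(a) + wt(b) - 1 \geq wt(n) - 1,
\]
the final step being the subadditivity $wt(a) + wt(b) \geq wt(a+b)$. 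If $\rho$ \emph{is} a symmetry node then $T_a \cong T_b$, forcing $a = b = n/2$ and $SNI(T) = 2\,SNI(T_a) \geq 2(wt(n/2)-1) = 2(wt(n)-1) \geq wt(n)-1$, using $wt(n/2) = wt(n)$ for even $n$. Either way $SNI(T) \geq wt(n)-1$, so together with the construction $min_{SNI}(n) = wt(n)-1$.

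To characterize the minimizers I would trace back the equality cases. In the symmetry-node case, $SNI(T) = wt(n)-1$ forces $2(wt(n)-1) \leq wt(n)-1$, hence $wt(n) = 1$ and $n = 2^k$; then $SNI(T) = 0$ means every inner node is a symmetry node, so $T = T^{fb}_k$, which lies in $\mathcal{T}_n$. In the non-symmetry case, equality throughout forces $SNI(T_a) = wt(a)-1$ and $SNI(T_b) = wt(b)-1$ (so $T_a \in \mathcal{T}_a$ and $T_b \in \mathcal{T}_b$ by induction) together with $wt(a)+wt(b) = wt(n)$, i.e. the binary supports of $a$ and $b$ are disjoint and their union is the support of $n$. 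I would then observe that gluing the top trees of $T_a$ and $T_b$ at $\rho$ produces a top tree with $wt(n)$ leaves whose pendant fully balanced subtrees are exactly $\{T^{fb}_i : n_i = 1\}$, each appearing once, so $T \in \mathcal{T}_n$ in the sense of Definition \ref{def:Ttilde}.

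The crux of the whole argument is the arithmetic fact $wt(a)+wt(b) \geq wt(a+b)$, with equality precisely when $a$ and $b$ have disjoint binary supports (equivalently, no carries occur in the binary addition). I expect to need this both for the bound and, in its equality form, to reconstruct the weight-tree structure and to confirm that the recombined pendant subtrees really do have distinct heights covering exactly the support of $n$. Establishing this carry-counting statement cleanly and carefully matching the recursive gluing against Definition \ref{def:Ttilde} is where the genuine work lies; everything else is bookkeeping driven by Lemma \ref{lem:recursion}.
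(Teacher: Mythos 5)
Your proof is correct, and while it largely parallels the paper, it takes a genuinely different route at the one step the paper handles with a separate structural lemma. Your upper bound is essentially the paper's Proposition \ref{prop:Ttilde} (uniqueness of the binary expansion forces distinct leaf counts, hence non-isomorphism, at every top-tree node), and your non-symmetric-root case matches the paper's inductive step, including Lemma \ref{lem_wt} and its equality case (no carries, disjoint binary supports) to reconstruct membership in $\mathcal{T}_n$ per Definition \ref{def:Ttilde}. The divergence is the symmetric-root case: the paper proves the standalone Lemma \ref{lem:rhoissym} via a minimal counterexample and a subtree-exchange construction $T'=\left((T_a^1,T_b^1),(T_a^2,T_b^2)\right)$ with strictly smaller $SNI$, whereas you dispatch the case arithmetically: $T_a\cong T_b$ gives $SNI(T)=2\,SNI(T_a)\geq 2(wt(n/2)-1)=2(wt(n)-1)$ since $wt(n/2)=wt(n)$ for even $n$, which exceeds $wt(n)-1$ unless $wt(n)=1$, whereupon $SNI(T)=0$ forces $T=T^{fb}_k$. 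A second, smaller difference: because you run the induction on the universal lower bound $SNI(T)\geq wt(n)-1$ for \emph{all} trees rather than on minimal trees, you avoid the paper's replacement argument showing that the maximal pendant subtrees of an $SNI$-minimal tree are themselves minimal. What each buys: the paper's Lemma \ref{lem:rhoissym} is purely combinatorial, independent of weight arithmetic, and of some standalone interest; your route is shorter and makes the equality analysis uniform, but it concentrates the real work on the carry-counting equality characterization of $wt(a)+wt(b)=wt(a+b)$, which you rightly flag as the crux --- and which the paper's appendix proof of Lemma \ref{lem_wt} indeed records in exactly the form you need (equality if and only if no carries occur).
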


\begin{remark}
Note that Theorem \ref{the:min_sym} also implies that the maximal number of symmetry nodes in a tree with $n$ leaves is $n-wt(n)$, because we have $SNI(T)=(n-1)-s(T)$ for all rooted binary trees $T$, which implies $s(T)=(n-1)-SNI(T)$. For a tree $T$  achieving the minimal $SNI$ value stated by Theorem \ref{the:min_sym}, this implies $s(T)=(n-1)-(wt(n)-1)=n-wt(n)$ is maximal.
\end{remark}

Before we can prove Theorem \ref{the:min_sym}, we need to investigate a few more properties of the symmetry nodes index, which will be needed for the proof.

We start with a property of the trees in $\mathcal{T}_n$. 

\begin{proposition} \label{prop:Ttilde} Let $n=\sum\limits_{i=0}^N n_i2^i\in \mathbb{N}$ and let  $\widetilde{T}_n \in \mathcal{T}_n$ with top tree $T_n^{top}$ as in Definition \ref{def:Ttilde}. Then, $SNI(\widetilde{T}_n)=wt(n)-1$.
\end{proposition}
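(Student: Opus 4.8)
The plan is to evaluate $SNI(\widetilde{T}_n)$ by partitioning the interior nodes of $\widetilde{T}_n$ into those that lie inside one of the attached fully balanced subtrees and those that belong to the top tree $T_n^{top}$, and then to classify each as a symmetry node or not. Recall that $SNI(\widetilde{T}_n)$ is by definition the number of interior nodes of $\widetilde{T}_n$ that fail to be symmetry nodes (equivalently, one may apply Lemma \ref{lem:recursion} repeatedly along the top tree); so it suffices to decide this for every interior node.

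First I would dispose of the nodes sitting inside the fully balanced blocks. By definition of the fully balanced tree $T_i^{fb}$, every one of its interior nodes is a symmetry node, and this status is unchanged when $T_i^{fb}$ is grafted below a leaf of $T_n^{top}$, since the two maximal pendant subtrees of such a node are exactly the same as in $T_i^{fb}$. Hence no interior node of a fully balanced block contributes to $SNI(\widetilde{T}_n)$.

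The decisive step concerns the interior nodes of the top tree, and here I would show that \emph{every} such node is a non-symmetry node. Fix an interior node $u$ of $T_n^{top}$ with children $u_1$ and $u_2$. By Definition \ref{def:Ttilde} each element of $S_n=\{T_i^{fb}:n_i=1\}$ is attached to exactly one, and a distinct, leaf of $T_n^{top}$. Thus the fully balanced trees lying below $u_1$ correspond to a nonempty index set $A\subseteq\{i:n_i=1\}$ and those below $u_2$ to a nonempty index set $B$, and $A$ and $B$ are disjoint. Consequently the maximal pendant subtree of $u$ rooted at $u_1$ has $\sum_{i\in A}2^i$ leaves, while the one rooted at $u_2$ has $\sum_{i\in B}2^i$ leaves. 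Since $A$ and $B$ are disjoint and nonempty we have $A\neq B$, so by uniqueness of the binary representation these two leaf counts differ. Subtrees with different numbers of leaves cannot be isomorphic, and therefore $u$ is not a symmetry node.

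It then remains only to count. A rooted binary tree with $wt(n)$ leaves has exactly $wt(n)-1$ interior nodes, all non-symmetry nodes by the previous step, whereas all interior nodes inside the blocks are symmetry nodes; hence $SNI(\widetilde{T}_n)=wt(n)-1$. As a consistency check, these $wt(n)-1$ top-tree nodes together with the $\sum_{i:n_i=1}(2^i-1)=n-wt(n)$ interior nodes of the blocks account for exactly $n-1$ interior nodes, as they must. The main obstacle is the claim in the third paragraph: everything rests on the elementary but essential fact that disjoint nonempty sets of distinct powers of two have distinct sums. This is precisely where the binary weight enters and forces each top-tree node to be imbalanced (and hence a non-symmetry node), so I would take care to state and use this uniqueness argument explicitly.
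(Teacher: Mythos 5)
Your proof is correct and follows essentially the same route as the paper's: count the interior nodes of the fully balanced blocks (all symmetry nodes, $n-wt(n)$ in total) and show every interior node of the top tree fails to be a symmetry node via the uniqueness of binary expansion, since the disjoint nonempty sets of powers of two below its two children have distinct sums. The only cosmetic differences are that you argue directly that the leaf counts differ rather than deriving a contradiction from an assumed isomorphism, and your formulation handles the $n=2^k$ case vacuously instead of via the paper's explicit case split.
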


\begin{proof}
First note that all inner vertices of all inner nodes of the trees $T_i^{fb}$ in $S_n$ are symmetry nodes  as these trees are fully balanced trees of size $2^i$, respectively. As the number of inner vertices and thus of symmetry nodes  in this case is $2^i-1$, this immediately implies that the number of symmetry nodes  in all trees $T_i^{fb}$ together is: 
\[\sum\limits_{i=0}^N n_i(2^i-1)=\sum\limits_{i=0}^N n_i2^i-\sum\limits_{i=0}^Nn_i=n-wt(n).\]
Thus, we have $SNI(\widetilde{T}_n) \geq (n-1)-(n-wt(n))= wt(n)-1$.

It remains to show that this lower bound is at the same time an upper bound. As all vertices in the trees attached to the leaves of $T_n^{top}$ are guaranteed to be symmetry nodes, the only way to achieve the desired bound is to show that no vertex of $\widetilde{T}_n$ which is at the same time an inner vertex of $T_n^{top}$ is a symmetry node of $\widetilde{T}_n$.\footnote{Note that it is allowed, though, that $T_n^{top}$ contains symmetry nodes! We will only show that none of its inner vertices are symmetry nodes anymore \emph{after} the trees of $S_n$ have been attached.}

Let us first consider the case  $n=2^{k}$ with $k\in\mathbb{N}$. In this case, we have $wt(n)=1$ and thus $|S_n|=1$ (cf. Definition \ref{def:Ttilde}). Thus, $T^{top}_n$ consists of only one vertex, which by definition is a leaf and which is subsequently replaced by the root of $T_k^{fb}$. So as $T^{top}_n$ has no inner vertices, there is nothing to show. 

Now let us consider the case where $n\neq 2^k$. Suppose that we have an inner vertex $v$ of $T^{top}_n$ which is a symmetry node in $\widetilde{T}_n$. Then its two maximal pendant subtrees, say $T_a$ and $T_b$, are isomorphic. In particular, they have the same size $s$. However, both trees employ pendant subtrees of $S_n$, and each tree $T_i^{fb} \in S_n$ occurs precisely once in $\widetilde{T}_n$ and thus also at most once in $(T_a,T_b)$. Let $n^a_i, n^b_i$ equal 1 if $T_i^{fb}$ of $S_n$ was attached to $T_n^{top}$ such that it is now contained in $T_a$ or $T_b$, respectively, and 0 else. Note that we have $n_a^i + n_b^i \leq 1$ for all values of $i$. So this implies that size $s$ of $T_a$ and $T_b$ can be written in two ways:
\[s=\sum\limits_{i=0}^{N^s} n_i^a 2^i= \sum\limits_{i=0}^{N^s} n_i^b 2^i,\] 
where $N^s=\max\limits_{k \in \mathbb{N}_0} \{k: 2^k \leq s\}$.
This is a contradiction due to the uniqueness of the binary expansion. Thus, the top tree contains no symmetry nodes, which -- together with the first part of the proof -- shows that $SNI(\widetilde{T}_n) = wt(n)-1$. This completes the proof.
\end{proof}

Next, we state the following lemma, which will be needed to prove Theorem \ref{the:min_sym}. 

\begin{lemma}\label{lem:rhoissym}
Let $T$ be a rooted binary tree with $n$ leaves and root $\rho$ that is minimal regarding the symmetry nodes index. Then we have the following equivalence:
\[ \rho \textit{ is a symmetry node} \iff  n=2^{k} \textit{ for some } k\in\mathbb{N}_0 \textit{ and } T=T^{fb}_k\]
\end{lemma}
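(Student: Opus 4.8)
The statement is an equivalence, so the plan is to prove both implications, the forward one being the substantial part. The backward implication is almost immediate: if $n=2^k$ and $T=T^{fb}_k$ with $k\geq 1$, then by definition of the fully balanced tree \emph{all} inner vertices of $T$ are symmetry nodes, and since $\rho$ is an inner vertex, it is in particular a symmetry node. (The degenerate case $n=1$ has no inner node and is excluded here, since ``$\rho$ is a symmetry node'' forces $\rho$ to be an inner vertex and hence $n\geq 2$.)

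For the forward implication I would argue by strong induction on $n$. Assume $T$ is $SNI$-minimal with root $\rho$ a symmetry node. Then its two maximal pendant subtrees $T_a,T_b$ are isomorphic, so $n$ is even and $SNI(T_a)=SNI(T_b)$; by Lemma \ref{lem:recursion} (with $\delta_T=0$) we get $SNI(T)=2\,SNI(T_a)$. The first key step is to observe that $T_a$ must itself be $SNI$-minimal among trees with $n/2$ leaves: if some tree $T_a'$ of size $n/2$ satisfied $SNI(T_a')<SNI(T_a)$, then the tree $(T_a',T_a')$ would have $n$ leaves and, again by Lemma \ref{lem:recursion}, $SNI$ value $2\,SNI(T_a')<SNI(T)$, contradicting the minimality of $T$. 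The base case $n=2$ is handled directly, since the only such tree is the cherry, which is $T^{fb}_1$.

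The crux is then to show that, for $n\geq 4$, the root of $T_a$ is \emph{also} a symmetry node. I would suppose not; then $T_a=(T_{aa},T_{ab})$ with $T_{aa}\not\cong T_{ab}$ (both subtrees nonempty, as $T_a$ has at least two leaves). The idea is to \emph{regroup} the four pendant subtrees occurring in $T$ --- namely $T_{aa},T_{ab},T_{aa},T_{ab}$ --- into the new tree $T':=\bigl((T_{aa},T_{aa}),(T_{ab},T_{ab})\bigr)$, which still has $n$ leaves. Applying Lemma \ref{lem:recursion}, both children of the root of $T'$ are now symmetry nodes while the root of $T'$ is not (because $T_{aa}\not\cong T_{ab}$), giving $SNI(T')=2\,SNI(T_{aa})+2\,SNI(T_{ab})+1$, whereas $SNI(T)=2\,SNI(T_{aa})+2\,SNI(T_{ab})+2$. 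Hence $SNI(T')=SNI(T)-1$, contradicting minimality of $T$. Therefore the root of $T_a$ is a symmetry node, and since $T_a$ is minimal of size $n/2<n$ with symmetric root, the induction hypothesis yields $T_a=T^{fb}_j$ with $n/2=2^j$; consequently $T=(T_a,T_a)=T^{fb}_{j+1}$ and $n=2^{j+1}$, closing the induction.

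I expect the main obstacle to be discovering the regrouping operation $T\mapsto T'$: its whole point is that minimality must propagate down the tree, and the swap that trades a symmetric root for two symmetric children while strictly decreasing $SNI$ by one is exactly what forces every grouping below $\rho$ to be balanced, i.e.\ forces $T$ to be fully balanced. Everything else is bookkeeping with Lemma \ref{lem:recursion}; the only care required is in the edge cases $n=1$ (no inner node, excluded) and $n=2$ (the cherry), which must be treated separately because the regrouping step needs $T_a$ to possess internal structure.
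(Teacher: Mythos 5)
Your proof is correct and rests on essentially the same key idea as the paper's: regrouping the four grandchild subtrees of an $SNI$-minimal tree with symmetric root into pairs of isomorphic copies (your $T'=\bigl((T_{aa},T_{aa}),(T_{ab},T_{ab})\bigr)$ is literally the paper's $T'=\bigl((T_a^1,T_b^1),(T_a^2,T_b^2)\bigr)$, since $T_b\cong T_a$), which trades one symmetry node at the root for two at its children and decreases $SNI$ by exactly one via Lemma \ref{lem:recursion}. The only organizational difference is that the paper argues by minimal counterexample and first \emph{rules out} that $\rho_a$ is a symmetry node, whereas you argue by direct induction and use the regrouping to \emph{force} $\rho_a$ to be one --- mirror images of the same argument --- and you additionally make explicit (via the doubled tree $(T_a',T_a')$) the fact that $T_a$ inherits $SNI$-minimality, a step the paper leaves implicit.
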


\begin{proof}
Trivially the root of a fully balanced tree is a symmetry node. The other direction is a bit more involved. So assume that a tree $T=(T_a,T_b)$ minimizing $SNI$ is such that $\rho$ is a symmetry node, but $T\neq T_k^{fb}$ for any $k$. We may assume without loss of generality that $T$ is minimal with this property, i.e. no other tree with fewer leaves and minimal $SNI$ fulfills the assertion. As the three rooted binary trees of size up to 3 (i.e. the single node, the cherry, and the caterpillar with three leaves) all fulfill the assertion anyway, we know that the number $n$ of leaves of $T$ is at least 4. This implies that $T_a$ and $T_b$, which are isomorphic, are of size at least 2 each, so their roots $\rho_a$  and $\rho_b$ are inner nodes. Note that none of them can be symmetry nodes, though. To see this, suppose that, say, $\rho_a$ is a symmetry node. As $T_a$ is smaller than $T$ and $T$ was the minimal tree not fulfilling the assertion, we know that, because $\rho_a$ is a symmetry node, $T_a$ equals $T_{k-1}^{fb}$, where $k$ is such that $2^{k-1}=n_a=\frac{n}{2}$. However, as the root $\rho$ of $T$ is a symmetry node by assumption, $T_b$ must be isomorphic to $T_a$, so $T_b$ also equals $T_{k-1}^{fb}$. But as $T=(T_a,T_b)$, this immediately implies that $T=T_k^{fb}$, which contradicts the assumption that $T$ is not fully balanced.

So $\rho_a$ and $\rho_b$ are no symmetry nodes but they are inner nodes, which means we can subdivide $T_a=(T_a^1,T_a^2)$ and $T_b=(T_b^1,T_b^2)$ into their standard decompositions. As $\rho_a$ and $\rho_b$ are no symmetry nodes, $T_a^1$ and $T_a^2$ are non-isomorphic, and the same holds for $T_b^1$ and $T_b^2$. However, as $T_a$ and $T_b$ are isomorphic, each of the maximal pendant subtrees of $T_a$ must be isomorphic to one of the maximal pendant subtrees of $T_b$. Let us assume without loss of generality that $T_a^1$ is isomorphic to $T_b^1$ and $T_a^2$ is isomorphic to $T_b^2$.

Now we consider the following tree $T'=(T_a',T_b')$, where $T_a'=(T_a^1,T_b^1)$ and $T_b'=(T_a^2,T_b^2)$. We now argue that $SNI(T')<SNI(T)$. 

To see this, we use Lemma \ref{lem:recursion} as well as the fact that $\rho$ is a symmetry node, but $\rho_a$ and $\rho_b$ are not, and that in $T'$ the root is \emph{not} a symmetry node (if it was, we would have that $T_a^1$ is isomorphic to $T_a^2$, which would imply that $\rho_a$ is a symmetry node, a contradiction), but its two children are: \begin{align*}SNI(T)&=SNI(T_a)+SNI(T_b)+0 \\ &=(SNI(T_a^1)+SNI(T_a^2)+1)+(SNI(T_b^1)+SNI(T_b^2)+1) \\&=SNI(T_a^1)+SNI(T_a^2)+SNI(T_b^1)+SNI(T_b^2)+2.
\end{align*}

On the other hand, for $T'$ we have: 
\begin{align*}SNI(T')&=SNI(T_a')+SNI(T_b')+1 \\ &=(SNI(T_a^1)+SNI(T_b^1)+0)+(SNI(T_a^2)+SNI(T_b^2)+0)+1 \\&=SNI(T_a^1)+SNI(T_a^2)+SNI(T_b^1)+SNI(T_b^2)+1.
\end{align*}

So in total, we have $SNI(T')<SNI(T)$, which contradicts the assumption that $T$ minimizes $SNI$. So if $T$ minimizes $SNI$ and $\rho$ is a symmetry node, then $T=T_k^{fb}$ for some $k \in \mathbb{N}_0$, which completes the proof.
\end{proof}

The last lemma we need before we can turn our attention to the proof of Theorem \ref{the:min_sym} is the following, whose proof can be found in the appendix.

\begin{lemma} \label{lem_wt}
Let $n,n_a,n_b\in \mathbb{N}$ such that $n=n_a+n_b$. Then we have $wt(n)\leq wt(n_a)+wt(n_b)$. 
\end{lemma}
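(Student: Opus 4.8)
The plan is to prove the subadditivity of the binary weight function by induction on $n=n_a+n_b$, exploiting the two elementary recursions $wt(2m)=wt(m)$ and $wt(2m+1)=wt(m)+1$, which follow immediately from the definition of the binary expansion (appending a trailing bit). It is convenient to prove the slightly more general statement for all $n_a,n_b\in\mathbb{N}_0$, using $wt(0)=0$: the base case $n=0$ is immediate, and whenever one of the summands is $0$ the inequality is trivial. Allowing zero is exactly what makes the induction go through, since the recursive applications below may produce summands equal to zero.

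For the inductive step I would distinguish cases according to the parities of $n_a$ and $n_b$ (both now assumed $\geq 1$). If both are even, write $n_a=2a$ and $n_b=2b$; then $n=2(a+b)$, and the recursion gives $wt(n)=wt(a+b)$, $wt(n_a)=wt(a)$, $wt(n_b)=wt(b)$, so the claim reduces directly to the induction hypothesis applied to the pair $(a,b)$. If exactly one is odd, say $n_a=2a$ and $n_b=2b+1$, then $n=2(a+b)+1$ and the same bookkeeping yields $wt(n)=wt(a+b)+1\le wt(a)+wt(b)+1=wt(n_a)+wt(n_b)$, again by the induction hypothesis.

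The only case carrying genuine content is when both $n_a$ and $n_b$ are odd, because this is precisely where a carry occurs and the weight can strictly drop. Writing $n_a=2a+1$ and $n_b=2b+1$ gives $n=2(a+b+1)$, hence $wt(n)=wt(a+b+1)$ while $wt(n_a)+wt(n_b)=wt(a)+wt(b)+2$. Here I would apply the induction hypothesis twice: first to the pair $(a+b,1)$, whose sum $a+b+1$ is strictly smaller than $n=2(a+b+1)$, to obtain $wt(a+b+1)\le wt(a+b)+1$; and then to $(a,b)$ to obtain $wt(a+b)\le wt(a)+wt(b)$. Chaining these bounds gives $wt(a+b+1)\le wt(a)+wt(b)+1\le wt(a)+wt(b)+2$, as required. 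I expect this carry case to be the main obstacle, essentially in the bookkeeping of the two nested applications of the induction hypothesis; conceptually it reflects the standard fact (a form of Kummer's theorem) that $wt(n_a)+wt(n_b)-wt(n_a+n_b)$ equals the number of carries occurring when $n_a$ and $n_b$ are added in base two, a quantity that is manifestly nonnegative.
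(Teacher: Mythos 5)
Your proof is correct, and it takes a genuinely different route from the paper. You argue by strong induction on $n=n_a+n_b$ via the digit recursions $wt(2m)=wt(m)$ and $wt(2m+1)=wt(m)+1$, splitting into parity cases; the extension to $\mathbb{N}_0$ is exactly the right move to make the recursive calls legal, and all of your inductive applications are to pairs with strictly smaller sum (in the both-odd case, $a+b+1 < 2(a+b+1)=n$ and $a+b<n$), so the induction is sound. The paper instead proves the lemma by a direct, non-inductive computation: it writes out the binary expansions of $n$, $n_a$, $n_b$, introduces the carry $c_i$ from position $i$ to $i+1$, uses the digit identity $n_i=n_i^a+n_i^b+c_{i-1}-2c_i$, and telescopes to obtain the exact identity $wt(n)=wt(n_a)+wt(n_b)-\sum_i c_i$, from which the inequality is immediate since $\sum_i c_i\geq 0$. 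Your approach is more elementary (no carry bookkeeping) and handles the delicate carry case cleanly with two nested uses of the hypothesis; what the paper's computation buys, beyond the inequality, is the precise deficit formula and hence the equality characterization ($wt(n)=wt(n_a)+wt(n_b)$ if and only if no carries occur, i.e.\ if and only if the binary expansions of $n_a$ and $n_b$ never share a $1$ in the same position), and the paper explicitly relies on this equality case later, in the proof of Theorem \ref{the:min_sym}, to conclude that the subtrees $T_a$ and $T_b$ partition the fully balanced subtrees $T_i^{fb}$. Your closing remark about Kummer's theorem shows you know this refinement exists, but your induction as written only delivers the inequality; to substitute it into the paper you would additionally need to track tightness through the parity cases.
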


We are now finally in the position to prove Theorem \ref{the:min_sym}.

\begin{proof}[Proof of Theorem \ref{the:min_sym}] By Proposition \ref{prop:Ttilde}, all that remains to be shown is that for any tree $T$ with $n$ leaves such that $T$ is $SNI$ minimal, we have  $T\in \mathcal{T}_n$. We can prove this by induction on $n$. For $n=1,2$, there is only one tree each, which are both rooted binary weight trees (a single vertex and a cherry, respectively), so there remains nothing to show.

So we now consider $n\geq 3$ we assume that the statement already holds for all trees of size up to $n-1$. We consider a rooted binary tree $T$ with $n>2$ leaves, minimal symmetry nodes index value and root $\rho$. We distinguish two cases. 
\begin{enumerate}
    \item If $\rho$ is a symmetry node, we use Lemma \ref{lem:rhoissym} to conclude that $n=2^{k}$ with $k\in\mathbb{N}$ and $T=T^{fb}_k$, which is exactly $\widetilde{T}_n$. Thus, we have $SNI(T)=0=wt(2^k)-1=wt(n)-1$, which completes the first case. 
    \item If $\rho$ is \emph{not} a symmetry node, we use the standard decomposition $T=(T_a,T_b)$ and exploit Lemma \ref{lem:recursion} combined with the inductive hypothesis to derive \begin{align*}SNI(T)&=SNI(T_a)+SNI(T_b)+ 1\\ &= (wt(n_a)-1)+(wt(n_b)-1)+1 \\
						&=  wt(n_a)+wt(n_b)-1,\end{align*}
						where $n_a$ and $n_b$ are the sizes of $T_a$ and $T_b$, respectively. Note that the inductive hypothesis applies here, because as $T$ is $SNI$ minimal, $T_a$ and $T_b$ must be $SNI$ minimal, too. Else, assume that at least one of them, say $T_a$, is not minimal. Then we could replace $T_a$ by an $SNI$ minimal tree $\widehat{T}_a$ of the same size, i.e. we could consider $\widehat{T}=(\widehat{T}_a,T_b)$. As $SNI(\widehat{T}_a)<SNI(T_a)$, it would be  guaranteed that $SNI(\widehat{T})<SNI(T)$, which contradicts the minimality of $T$.
						
						So we have $SNI(T)=wt(n_a)+wt(n_b)-1 \leq wt(n)-1$, where the inequality is due to Proposition \ref{prop:Ttilde}.  It only remains to show that $wt(n_a)+wt(n_b)-1 \geq wt(n)-1$, which then implies the claimed equality. This inequality, however, follows directly by Lemma \ref{lem_wt}.
\end{enumerate}
So we have seen that in all cases we have $SNI(T)=wt(n)-1$ whenever $T$ is $SNI$ minimal, which implies $\min_{SNI}(n)=wt(n)-1$.

Moreover, using the inductive hypothesis we know $T_a$ and $T_b$ are contained in $\mathcal{T}_{n_a}$ and $\mathcal{T}_{n_b}$, respectively. 
 The results from above imply $wt(n_a)+wt(n_b)=wt(n)$. Using Lemma \ref{lem_wt} again, we can now conclude that the binary expansions of $n_a$ and $n_b$ never have a 1 at the same position and every 1 from the binary expansion of $n$ can either be found in the binary expansion of $n_a$ or of $n_b$. It follows that $T_a$ and $T_b$ together contain all subtrees $T^{fb}_i$ with $T^{fb}_i \in S_n$, but they never both contain the same $T^{fb}_i$.\par 
 
Thus, we can consider $T$ to consist of a top tree $T_{n}^{top}$, which in turn has the top trees of $T_a$ and $T_b$ as maximal pendant subtrees, and the described attached fully balanced subtrees. In summary, this shows that $T \in \mathcal{T}_n$, which completes the proof.
\end{proof}

As we can now fully characterize $SNI$ minimal trees by Theorem \ref{the:min_sym}, it is obvious that such trees are not unique for all $n$, because the top tree can be chosen arbitrarily. This naturally leads to the question of how many $SNI$ minimal trees there actually are for a given value of $n$. The following theorem states an explicit formula for this number.

\begin{theorem}\label{the:numbminSNItrees}
Let $n\in \mathbb{N}$. Then, for the number $m_n$ of $SNI$ minimal trees with $n$ leaves, we have:
$$m_n=(2\cdot wt(n) - 3)!!,$$ where the double factorial is defined as $(2 \cdot wt(n) - 3)!!=1 \cdot 3 \cdot 5 \cdot ... \cdot (2 \cdot wt(n) - 5) \cdot (2 \cdot wt(n) - 3)$ for $wt(n)>1$ and $(2 \cdot wt(n) - 3)!!:=1$ if $wt(n)=1$.
\end{theorem}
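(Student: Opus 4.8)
The plan is to leverage the full characterization of $SNI$ minimal trees given in Theorem~\ref{the:min_sym}: the $SNI$ minimal trees with $n$ leaves are precisely the rooted binary weight trees in $\mathcal{T}_n$. By Definition~\ref{def:Ttilde}, each such tree is obtained by taking a top tree $T_n^{top}$ with $wt(n)$ leaves and attaching the fully balanced subtrees of $S_n$ to these leaves in some order. Writing $w := wt(n)$, the crucial observation is that the $w$ trees in $S_n=\{T_i^{fb}:n_i=1\}$ have pairwise distinct heights $i$ and hence pairwise distinct sizes $2^i$, so they are pairwise non-isomorphic. Consequently, the $w$ leaves of the top tree become effectively \emph{labeled} by the distinct subtrees attached to them, and counting $SNI$ minimal tree shapes reduces to a labeled enumeration problem.

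First I would make the correspondence between weight tree shapes and leaf-labeled top trees precise. Since the attached subtrees are pairwise non-isomorphic, any isomorphism between two weight trees in $\mathcal{T}_n$ must map each pendant subtree $T_i^{fb}$ onto the unique subtree of the same size, i.e. onto itself; hence it restricts to an isomorphism of the two top trees that respects the assignment of subtrees to leaves. Conversely, isomorphic leaf-labeled top trees obviously yield isomorphic weight trees. This establishes a bijection between the distinct tree shapes in $\mathcal{T}_n$ and the rooted binary trees whose $w$ leaves carry $w$ distinct labels, one label per element of $S_n$.

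It then remains to count rooted binary trees with $w$ labeled leaves, a classical quantity equal to $(2w-3)!!$, which I would establish by the standard edge-insertion induction. A rooted binary tree with $w-1$ labeled leaves has $2(w-1)-2=2w-4$ edges, so there are $(2w-4)+1=2w-3$ positions at which to attach the $w$-th leaf (subdividing an existing edge, or creating a new root above the old one), each yielding a distinct labeled tree. With base case $w=1$ (a single leaf, one tree) this recursion produces $(2w-3)!!$, matching the stated convention $(2\cdot 1-3)!!:=1$. Substituting $w=wt(n)$ yields $m_n=(2\cdot wt(n)-3)!!$, as claimed.

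The main obstacle is the bookkeeping in the bijection step: one has to argue carefully that distinct leaf-labeled top trees never collapse to the same weight tree shape and that no weight tree shape is missed. This is precisely where the pairwise non-isomorphism of the subtrees in $S_n$, guaranteed by the uniqueness of the binary expansion of $n$, is essential, since it is what forbids any nontrivial automorphism from permuting the pendant subtrees and thereby prevents both over- and undercounting. Everything else reduces to the well-known enumeration of labeled rooted binary trees.
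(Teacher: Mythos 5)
Your proposal is correct and follows essentially the same route as the paper: both reduce the count, via Theorem~\ref{the:min_sym} and the pairwise non-isomorphic subtrees in $S_n$, to enumerating rooted binary trees whose $wt(n)$ leaves are bijectively labeled by the elements of $S_n$, i.e.\ phylogenetic trees on $wt(n)$ leaves. The only difference is that the paper cites the classical count $(2|X|-3)!!$ from Semple and Steel, whereas you re-derive it by the standard edge-insertion induction and spell out the shape-versus-labeled-tree bijection (which the paper asserts without detail) more carefully.
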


\begin{proof}
From Theorem \ref{the:min_sym} we know that a minimal tree $T$ with $n$ leaves consists of the subtrees $T^{fb}_{i}\in S_n$ (fixed part of $T$) with a top tree $T_{n}^{top}$ with $wt(n)$ leaves (see Definition \ref{def:Ttilde}) (flexible part of $T$). Thus, the number of minimal trees only depends on the number of possible top trees.\par
We can consider $T_{n}^{top}$ as a tree whose leaves are bijectively labeled by the elements of $S_n$. Such leaf labeled trees are also known as phylogenetic $X$-trees in the literature, where $X$ denotes the label set. So we can regard $T_{n}^{top}$ as a phylogenetic $S_n$-tree, and we have $wt(n)$ many leaves. The number of such trees is already known to be $(2|X|-3)!!=(2wt(n)-3)!!$ \cite[Corollary 2.2.4]{semple_phylogenetics_2003}, which directly leads to the required result.
\end{proof}

\begin{remark}
Theorem \ref{the:numbminSNItrees} implies that the minimal tree for the symmetry nodes index is  unique if and only if $wt(n)=1$ or $2$, where the first case corresponds to $n=2^k$ and thus to the unique $SNI$ minimal tree $T_k^{fb}$. An example for the second case is, for instance, depicted in Figure \ref{fig:SackinIndecisive}: For $n=6=2^2+2^1$, we have $wt(6)=2$. Thus, the tree depicted in the left of Figure \ref{fig:SackinIndecisive} is the unique $SNI$ minimal tree with 6 leaves.
\end{remark}

Note that the sequence $(m_n)_{n \in \mathbb{N}}$ apparently has so far not appeared in any other context as it was not contained in the Online Encyclopedia of Integer Sequences \cite{OEIS}. It has been added to this encyclopedia in the course of the present manuscript \cite[Sequence A344852]{OEIS}.

\section{Extremal values and trees of the (modified) cherry index} \label{sec:cher}

In this section, we briefly fill some gaps in the literature concerning the extremal properties of the (modified) cherry index. Note that in this context, it does not matter if we consider the original or the modified version of the cherry index, as only the roles of maxima and minima will swap. 

The following two theorems are rather obvious, but to the best of our knowledge nowhere to be found in the literature, so we state them here for comparison with $SNI$.

\subsection{Maximal value and maximal tree}\label{sec:chermax}

We start with the simpler case, namely the maximum value of $mCI$ (and thus minimum value of $CI$) and the unique tree that achieves it. Unsurprisingly, this is the caterpillar.

\begin{theorem} \label{the:catunique}Let $n \in \mathbb{N}_{\geq 2}$. Then, we have: $max_{mCI}(n)=n-2$, and this value is uniquely achieved by the caterpillar tree. 
\end{theorem}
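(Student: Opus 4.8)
The plan is to prove both claims in Theorem \ref{the:catunique}: that $\max_{mCI}(n) = n-2$ and that the caterpillar is the unique maximizer. The key observation is the relationship $mCI(T) = n - 2\cdot c(T)$ from the definition, which means that maximizing $mCI$ is equivalent to minimizing the number of cherries $c(T)$. So the entire theorem reduces to showing that $c(T) \geq 1$ for every rooted binary tree with $n \geq 2$ leaves, with equality if and only if $T$ is the caterpillar.

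First I would recall the fact, already stated in the preliminaries, that every rooted binary tree with at least two leaves has at least one cherry, so $c(T) \geq 1$ and hence $mCI(T) = n - 2c(T) \leq n - 2$. This establishes the upper bound. To see that the bound is attained, I would invoke the definition of the caterpillar tree $T_n^{cat}$, which for $n \geq 2$ is defined as the unique rooted binary tree containing exactly one cherry; thus $c(T_n^{cat}) = 1$ and $mCI(T_n^{cat}) = n - 2\cdot 1 = n-2$, so the maximum value $n-2$ is indeed achieved.

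For uniqueness, I would argue that any tree $T$ attaining $mCI(T) = n-2$ must satisfy $c(T) = 1$, i.e. it has exactly one cherry. Since the caterpillar is by definition the unique rooted binary tree with exactly one cherry, it follows immediately that $T = T_n^{cat}$. This makes the uniqueness essentially a restatement of the defining property of the caterpillar, so little extra work is needed once the value is pinned down.

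The main obstacle, if any, is purely bookkeeping rather than conceptual: one must be careful that the claim ``$c(T)=1$ implies $T$ is the caterpillar'' is genuinely supplied by the definition of $T_n^{cat}$ given in the Special trees subsection, rather than requiring an independent combinatorial argument. Since the excerpt defines the caterpillar precisely as the unique binary rooted tree with only one cherry, this step is immediate and the whole proof is short. The only other detail worth checking is the edge behavior at small $n$ (for instance $n=2$, where the single cherry gives $mCI = 0 = n-2$), which is consistent with the general formula and requires no separate treatment.
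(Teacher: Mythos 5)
Your proposal is correct and follows essentially the same route as the paper's own proof: both reduce the statement to the facts that every rooted binary tree with $n \geq 2$ leaves has at least one cherry and that the caterpillar is, by definition, the unique tree with exactly one cherry, then read off $mCI(T_n^{cat}) = n - 2$. Your version is slightly more explicit about the equivalence between maximizing $mCI$ and minimizing $c(T)$, but there is no substantive difference.
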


\begin{proof} By definition, the caterpillar is the only rooted binary tree with only one cherry; all other such trees have strictly more cherries. This means that $CI(T_n^{cat})=1$, and thus all leaves except for two do \emph{not} belong to a cherry. This shows that $max_{CI}(n)=n-2$, and this value is uniquely achieved by the caterpillar, which completes the proof.
\end{proof}

Note that Theorem \ref{the:catunique} shows that concerning the maximum, there is no difference between the modified cherry index and the symmetry nodes index. In fact, there is no difference between both of these indices and any established balance index -- as also Sackin, Colless and the Total Cophenetic indices, to name just a few, state that the caterpillar is the unique most imbalanced tree. Basically, this is a good outcome, because it coincides with what you would intuitively expect of a balance index.

However, we will see subsequently that the links between the cherry index and the symmetry nodes index are not nearly that strong concerning the minimum. 

\subsection{Minimal value and number of minimal trees}

Before we can state the main theorem of this section, we need to define a certain class of trees, which will turn out to be the class of trees minimizing $mCI$ (and thus maximizing $CI$).

\begin{definition}\label{def_minmCIset} Let $n \in \mathbb{N}_{\geq 2}$, and let $\widehat{S}_n$ be the set of rooted binary trees with $\lceil \frac{n}{2}\rceil$ many leaves. Then, we define the set $\widehat{\mathcal{T}}_n$ of \emph{rooted binary cherry trees} as the set of trees on $n$ leaves which contains all trees that can be constructed from trees of $\widehat{S}_n$ by replacing $\lfloor\frac{n}{2} \rfloor$ leaves by cherries.
\end{definition}

Note that Definition \ref{def_minmCIset} ensures that, if $n$ is even, $\widehat{\mathcal{T}}_n$ contains all trees in which \emph{all} leaves belong to cherries, and if $n$ is odd, it contains all trees in which all but one leaf belong to cherries. In both cases, it is obvious that these trees clearly maximize $CI$ (and thus minimize $mCI$). This directly leads to the following theorem.

\begin{theorem}
\label{thm:mCImin}
Let $n\in \mathbb{N}$ and let $T$ be a rooted binary tree with $n$ leaves. Then, we have   \[min_{mCI}(n)=MOD(n,2)=\begin{cases} 0 & \mbox{ if $n$ is even, }\\ 1 &\mbox{  if $n$ is odd.} \end{cases}\] Moreover, we have $mCI(T)=min_{mCI}(n)=MOD(n,2)$ if and only if $T\in \widehat{\mathcal{T}}_n$, i.e. if and only if $T$ is a rooted binary cherry tree.
\end{theorem}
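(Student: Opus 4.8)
The plan is to prove both the value $\min_{mCI}(n)=\mathrm{MOD}(n,2)$ and the characterization of minimizers in one argument, splitting the parity of $n$ and arguing that membership in $\widehat{\mathcal{T}}_n$ is equivalent to attaining the bound. First I would establish the lower bound on $mCI$. Since $mCI(T)=n-2c(T)$, minimizing $mCI$ is the same as maximizing the cherry count $c(T)$. The key combinatorial fact is that the parents of distinct cherries are distinct inner vertices, and each cherry uses up exactly two leaves, so $2c(T)\leq n$, giving $c(T)\leq \lfloor n/2\rfloor$ and hence $mCI(T)=n-2c(T)\geq n-2\lfloor n/2\rfloor=\mathrm{MOD}(n,2)$. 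This immediately yields $\min_{mCI}(n)\geq \mathrm{MOD}(n,2)$, and the bound is uniform over all trees of size $n$.

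Next I would show the bound is achieved by every tree in $\widehat{\mathcal{T}}_n$, establishing $\min_{mCI}(n)=\mathrm{MOD}(n,2)$. By Definition \ref{def_minmCIset}, a tree $\widehat{T}\in\widehat{\mathcal{T}}_n$ arises from a tree on $\lceil n/2\rceil$ leaves by replacing $\lfloor n/2\rfloor$ of its leaves with cherries. I would count leaves directly: the $\lfloor n/2\rfloor$ replaced leaves each become a cherry contributing $2$ leaves, and the remaining $\lceil n/2\rceil-\lfloor n/2\rfloor=\mathrm{MOD}(n,2)$ original leaves stay as single leaves. Thus the total number of leaves is $2\lfloor n/2\rfloor+\mathrm{MOD}(n,2)=n$, confirming the size, and exactly $\mathrm{MOD}(n,2)$ leaves fail to lie in a cherry, so $mCI(\widehat{T})=\mathrm{MOD}(n,2)$. (One should note the harmless degenerate case $n=2$, where $\lceil n/2\rceil=1$ and the single-vertex top tree becomes a cherry.)

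For the converse — that any minimizer lies in $\widehat{\mathcal{T}}_n$ — I would argue from equality in the leaf-counting bound. If $mCI(T)=\mathrm{MOD}(n,2)$, then $c(T)=\lfloor n/2\rfloor$, so the cherries are as numerous as possible and their $2\lfloor n/2\rfloor$ leaves are all distinct; the only leaves not in a cherry are the remaining $\mathrm{MOD}(n,2)$ leaves. When $n$ is even, every leaf belongs to a cherry; when $n$ is odd, exactly one leaf does not. I would then contract each cherry to a single leaf: collapsing the $\lfloor n/2\rfloor$ cherries yields a rooted binary tree $T'$ on $\lceil n/2\rceil$ leaves (the $\lfloor n/2\rfloor$ cherry-images plus, when $n$ is odd, the one lone leaf), so $T'\in\widehat{S}_n$. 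Reversing the contraction recovers $T$ as a member of $\widehat{\mathcal{T}}_n$, since $T$ is obtained from $T'$ by expanding $\lfloor n/2\rfloor$ of its leaves back into cherries.

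The main obstacle is verifying that the contraction of all cherries really produces a valid rooted binary tree in $\widehat{S}_n$ with the claimed leaf count, and in particular that cherry parents cannot themselves sit inside another cherry in a way that breaks the bijection with $\widehat{\mathcal{T}}_n$. The subtle point is whether two cherry-parents could be adjacent; but an inner node both of whose children are leaves is a cherry parent, and if instead one child of an inner node is itself a cherry parent, that child is an inner node rather than a leaf, so adjacent cherries do not overlap in leaves — this is exactly why the parents are distinct and the leaves disjoint. I would make this bookkeeping precise by the parity case split, taking care that when $n$ is odd the single non-cherry leaf is correctly accounted for as one of the $\lceil n/2\rceil$ leaves of the contracted tree. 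With that, the equivalence $mCI(T)=\mathrm{MOD}(n,2)\iff T\in\widehat{\mathcal{T}}_n$ follows.
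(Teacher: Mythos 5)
Your proposal is correct, and it takes the same basic route as the paper -- identify the minimizers as the trees in which all (respectively, all but one) leaves lie in cherries, and match these against Definition \ref{def_minmCIset} -- but it is substantially more rigorous than what the paper actually writes. The paper's proof is little more than an assertion: it observes that for even $n$ one minimizes $mCI$ by placing every leaf in a cherry, that for odd $n$ one leaf must always be left out, declares that such trees are ``by Definition \ref{def_minmCIset} precisely the trees of $\widehat{\mathcal{T}}_n$'', and adds only a non-emptiness check for $\widehat{\mathcal{T}}_n$ (caterpillar top tree with cherries attached). You supply the two pieces that make this airtight: the lower bound $mCI(T)\geq \mathrm{MOD}(n,2)$ via leaf-disjointness of cherries (so $2c(T)\leq n$), and -- more importantly -- the converse, where equality forces $c(T)=\lfloor n/2\rfloor$ and contracting the pairwise disjoint cherries yields a tree $T'\in\widehat{S}_n$ on $\lceil n/2\rceil$ leaves from which $T$ is recovered by re-expanding the cherry-image leaves, whence $T\in\widehat{\mathcal{T}}_n$. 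This contraction/expansion correspondence is exactly the step the paper leaves implicit, and your justification that it is well defined (cherry parents are inner nodes, so distinct cherries are leaf-disjoint and can be contracted simultaneously) is the right one. Two minor remarks: your achievability step tacitly uses $\widehat{\mathcal{T}}_n\neq\emptyset$, which the paper verifies explicitly but which is immediate since $\widehat{S}_n\neq\emptyset$; and the lone-leaf bookkeeping there is even dispensable, since $mCI(T)=n-2c(T)\equiv n \pmod 2$ already rules out $mCI(T)=0$ for odd $n$.
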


Note that Theorem \ref{thm:mCImin} translates to the well-known cherry index in the sense that $\max_{CI}=\lfloor \frac{n}{2} \rfloor$, and the maximum is achieved by a tree $T$ if and only if $T\in \widehat{\mathcal{T}}_n$. 

\begin{proof} Clearly, if $n$ is even, we can minimize $mCI$ by taking a tree in which all leaves belong to cherries. These are by Definition \ref{def_minmCIset} precisely the trees of $\widehat{\mathcal{T}}_n$. 

However, when $n$ is odd, it is impossible to place all leaves in cherries; one will always be left out. So any tree of $\widehat{\mathcal{T}}_n$ will again achieve this minimum in this case, too.

Note that $\widehat{\mathcal{T}}_n$ can never be the empty set, because as $n \geq 2$, we can for instance construct a tree $T \in \widehat{\mathcal{T}}_n$ by taking $T^{cat}_{\lceil\frac{n}{2}\rceil}$ and replacing leaves by cherries according to Definition \ref{def_minmCIset}. This completes the proof.
\end{proof}

Next, we want to count the number of trees that minimize $mCI$, i.e. we want to calculate $|\widehat{\mathcal{T}}_n| $ for all $n \in \mathbb{N}_{\geq 2}$. This is straightforward in case $n$ is even, but it turns out to be more involved if $n$ is odd.

\begin{theorem}
\label{the:numbminmCItrees}
Let $n\in \mathbb{N}_{\geq 2}$. Then, for the number $\widehat{m}_n$ of $mCI$ minimal trees with $n$ leaves, we have: \begin{enumerate}
    \item $\widehat{m}_n=WE\left(\frac{n}{2}\right)$ if $n$ is even, and
   \item $\widehat{m}_n=a_{\frac{n-1}{2}} < \left(\frac{n+1}{2}\right)\cdot WE\left(\frac{n+1}{2}\right)$ if $n$ is odd, where $a_{\frac{n-1}{2}}$  is the $\left(\frac{n-1}{2}\right)^{th}$ element of the sequence given by the generating function $f(z):=\frac{1}{1-g(z)}$, where $g(z)$ is the generating function of the Wedderburn Etherington numbers.
\end{enumerate}
\end{theorem}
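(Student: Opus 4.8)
The plan is to start from Theorem~\ref{thm:mCImin}, which tells us that the $mCI$-minimal trees are exactly the rooted binary cherry trees, so that $\widehat{m}_n=|\widehat{\mathcal{T}}_n|$, and then to count $\widehat{\mathcal{T}}_n$ separately in the even and odd cases. For even $n$, the construction in Definition~\ref{def_minmCIset} replaces \emph{all} $\frac{n}{2}$ leaves of a base tree on $\frac{n}{2}$ leaves by cherries. I would observe that this is a bijection onto $\widehat{\mathcal{T}}_n$: conversely, given any tree in which every leaf lies in a cherry, contracting each cherry to a single leaf recovers a unique base tree on $\frac{n}{2}$ leaves, and this inverts the construction. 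Since the number of base trees is $WE\left(\frac{n}{2}\right)$, this yields $\widehat{m}_n=WE\left(\frac{n}{2}\right)$.

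For odd $n$ I would first establish the structural decomposition that drives the count. A tree $T\in\widehat{\mathcal{T}}_n$ has a unique leaf $\ell$ not lying in a cherry. I would follow the path from the root down to the parent of $\ell$; at every node on this path exactly one child continues towards $\ell$, while the other child is the root of a maximal pendant subtree $S_i$ all of whose leaves lie in cherries. Each such $S_i$ has an even number $2m_i\ge 2$ of leaves, since a pendant subtree with a single leaf would produce a second solitary leaf and contradict uniqueness of $\ell$, and by the even case already proved there are exactly $WE(m_i)$ possible shapes for it. Because $\ell$ is the unique solitary leaf, both the path and the ordered list $(S_0,\ldots,S_k)$ can be read off from $T$, and conversely any non-empty such sequence assembles into a valid tree of $\widehat{\mathcal{T}}_n$; hence $\widehat{\mathcal{T}}_n$ is in bijection with the non-empty sequences $(S_0,\ldots,S_k)$ of even cherry trees for which $\sum_{i=0}^{k} m_i=\frac{n-1}{2}$. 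Weighting an even cherry subtree with $2m$ leaves by $z^m$, its shapes are counted by $\sum_{m\ge1}WE(m)z^m=g(z)$, so non-empty sequences are counted by $\frac{g(z)}{1-g(z)}$. As $\frac{1}{1-g(z)}=1+\frac{g(z)}{1-g(z)}$ and $\frac{n-1}{2}\ge 1$, their coefficients at $z^{\frac{n-1}{2}}$ coincide, giving $\widehat{m}_n=a_{\frac{n-1}{2}}$.

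It then remains to prove the strict inequality, for which I would count by the marked (uncovered) leaf. Every tree of $\widehat{\mathcal{T}}_n$ arises from a pair consisting of a base tree on $\frac{n+1}{2}$ leaves together with the single leaf that is \emph{not} replaced by a cherry; there are exactly $\frac{n+1}{2}\cdot WE\left(\frac{n+1}{2}\right)$ such pairs, and the map sending a pair to its cherry tree is surjective onto $\widehat{\mathcal{T}}_n$. This map is \emph{not} injective: since a base tree has $\frac{n+1}{2}\ge 2$ leaves it contains a cherry, and marking either of the two leaves of that cherry yields isomorphic trees. A surjection from a finite set that fails to be injective has strictly smaller image, so $\widehat{m}_n<\frac{n+1}{2}\cdot WE\left(\frac{n+1}{2}\right)$.

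I expect the odd case to be the main obstacle. The delicate points are verifying that the spine decomposition is a genuine bijection -- in particular that the spine is forced because $\ell$ is the \emph{unique} solitary leaf and each off-spine subtree is distinguishable from the spine direction -- and correctly matching the generating function $\frac{g}{1-g}$ of non-empty sequences with the stated $\frac{1}{1-g}$ through the harmless shift in index.
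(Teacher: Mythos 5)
Your proposal is correct, and its even case and strict-inequality argument essentially coincide with the paper's: the paper likewise gets $\widehat{m}_n=|\widehat{S}_n|=WE\left(\frac{n}{2}\right)$ for even $n$ (you merely make the cherry-contraction bijection explicit, which the paper leaves implicit), and it proves $\widehat{m}_n<\frac{n+1}{2}\cdot WE\left(\frac{n+1}{2}\right)$ exactly as you do, via the observation that the base tree has $\frac{n+1}{2}\geq 2$ leaves and hence a cherry whose two leaves give the same marked tree. Where you genuinely diverge is the exact odd count. The paper does \emph{not} decompose the tree along a spine; instead it encodes each tree of $\widehat{\mathcal{T}}_n$ as a nested-parentheses expression, with the solitary leaf playing the role of $c$ and each cherry the role of $x$, and then invokes the known interpretation of the sequence of $f(z)=\frac{1}{1-g(z)}$ (Remark \ref{rem_brackets}, OEIS A085748) as the number of commutative non-associative bracketings of $c\cdot x^{\frac{n-1}{2}}$ to conclude $\widehat{m}_n=a_{\frac{n-1}{2}}$. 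You instead prove the coefficient identity from scratch: the unique solitary leaf forces a spine from the root, each off-spine maximal pendant subtree is an even cherry tree with $2m_i\geq 2$ leaves counted by $WE(m_i)$ (here you reuse your even case as a lemma), so $\widehat{\mathcal{T}}_n$ is in bijection with nonempty sequences counted by $\frac{g(z)}{1-g(z)}$, whose positive-degree coefficients agree with those of $\frac{1}{1-g(z)}$. Your route is more self-contained -- the paper's argument ultimately rests on the cited OEIS interpretation of $f$, which is stated but not proved there -- while the paper's route is shorter and ties the count to a recognizable combinatorial object. One cosmetic point: at the parent of $\ell$, a single-leaf off-spine subtree would not create a \emph{second} solitary leaf, as you say, but would instead put $\ell$ itself into a cherry; either way the contradiction stands, so this is a phrasing slip rather than a gap.
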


Before we prove this theorem, we take a closer look at the generating function $f(z)$ defined in the second part of  Theorem  \ref{the:numbminmCItrees}.

\begin{remark} \label{rem_brackets}The Wedderburn Etherington numbers can be found (together with some information on their generating function $g(z)$) in the Online Encyclopedia of Integer Sequences \cite[Sequence A001190]{OEIS}. No closed formula to calculate $WE(n)$ is known to date. Thus, it is not surprising that there is no known closed formula for $f(z)$ from Theorem  \ref{the:numbminmCItrees}, either. However, $f(z)$ and its sequence  $(a_n)_{n\in{\mathbb{N}}}$ are also already known in the literature and can be found in the OEIS, too \cite[Sequence A085748]{OEIS}. The $n^{th}$ element of this sequence is known there, for instance, as the number of interpretations of the term $c \cdot x^n$ (or number of ways to insert parentheses) when multiplication is commutative but not associative.  This knowledge will turn out to be helpful in the proof of Theorem \ref{the:numbminmCItrees}. 
\end{remark}

\begin{proof}[Proof of Theorem \ref{the:numbminmCItrees}] We prove the cases that $n$ is even and $n$ is odd separately. However, in both cases we use the fact that by Theorem \ref{thm:mCImin} the set of $mCI$ minimal trees is $\widehat{\mathcal{T}}_n$, which can be constructed according to Definition \ref{def_minmCIset} by taking all rooted binary trees with $\lceil\frac{n}{2} \rceil$ leaves and replacing $\lfloor\frac{n}{2} \rfloor$ of their leaves by cherries. 

\begin{enumerate}
    \item We first consider the simpler case that $n$ is even. Here, by the above reasoning concerning Theorem \ref{thm:mCImin} and Definition \ref{def_minmCIset}, we have $\widehat{m}_n=|\widehat{\mathcal{T}}_n|=|\widehat{S}_n|=WE\left(\frac{n}{2}\right)$, as in this case, \emph{all} leaves of the trees in $\widehat{S}_n$ have to be replaced by cherries. This completes the first part of the proof.
    \item If $n$ is odd, again by the above reasoning concerning Theorem \ref{thm:mCImin} and Definition \ref{def_minmCIset}, we know that we have to consider all  $WE\left(\lceil\frac{n}{2}\rceil\right)=WE\left(\frac{n+1}{2}\right)$ trees of $\widehat{S}_n$ and all possible ways to replace $\lfloor \frac{n}{2} \rfloor=\frac{n-1}{2}$ of its leaves by cherries. But as opposed to the even case, we have a special leaf here, namely the one which does not belong to a cherry. We have $\left(\frac{n+1}{2}\right)$ leaves, so we have  $\left(\frac{n+1}{2}\right)$ ways to pick a leaf (and turn it into the special one) from each of the $ WE\left(\frac{n+1}{2}\right)$ many trees in $\widehat{S}_n$ (and all other leaves will be replaced by cherries). However, note that as $n$ is in $\mathbb{N}_{\geq 2}$ and $n$ is odd, we have $\frac{n+1}{2} \geq 2$, so each tree in $\widehat{S}_n$ has at least one cherry. For symmetry reasons, it does not matter which one of two leaves in the same cherry is turned into the special leaf, so this shows that $\widehat{m}_n < \left(\frac{n+1}{2}\right) \cdot WE\left(\frac{n+1}{2}\right)$. 
    
    It remains to show that $\widehat{m}_n=a_{\frac{n-1}{2}}$. In order to see this, recall from Remark \ref{rem_brackets} that $a_{\frac{n-1}{2}}$ denotes the number of ways to insert parentheses into the term $c \cdot x^{\frac{n-1}{2}}$ when multiplication is commutative but not associative. We now show that this has a one-to-one correspondence with $|\widehat{\mathcal{T}}_n|$. To see this, recall that rooted binary trees can be intuitively represented by nested parentheses (c.f. \cite[p. 440 ff.]{knuth4a} and \cite{felsenstein_newick_2000}), where leaves which are near each other in the tree are grouped together in parentheses. If we denote all leaves by $\ell$, for instance, then $T_1$ from Figure \ref{fig:exSymCherryDef} can be denoted as $((((\ell,\ell),\ell),\ell),(\ell,\ell)),\ell)$. However, note that the tree remains unchanged if we swap `left' and `right' subtrees, and this is also the case for the nested parentheses. So for example, the same tree can also be denoted by $(((\ell,(\ell,\ell)),\ell),(\ell,\ell)),\ell)$. 
    Now let us call the special leaf $c$. As we know that all leaves except for $c$ have to belong to cherries, we can call each such cherry $x$. Now in order to construct the trees of $\widehat{\mathcal{T}}_n$ starting from the special leaf and the cherries, we need to find all possible ways to put the $c$ and the $\frac{n-1}{2}$ many $x$'s into nested parentheses, which by the above observation corresponds to all trees in $\widehat{\mathcal{T}}$. Note that this is equivalent to inserting  parentheses into the term $c \cdot x^{\frac{n-1}{2}}$ when multiplication is commutative but not associative -- it has to be commutative as we can swap the roles of `left' and `right' subtrees as explained, and it cannot be associative, because if we move the parentheses, we change the underlying tree. So in summary, using Remark \ref{rem_brackets}, this implies $\widehat{m}_n=a_{\frac{n-1}{2}}$ and thus completes the proof. \qedhere
\end{enumerate}
\end{proof}

The sequence $\widehat{m}_n$ given by Theorem \ref{the:numbminmCItrees} starts with 1, 1, 1, 1, 2, 1, 4, 2, 9, 3, 20, 6, 46, 11, 106, 23, 248, 46, 582, 98, 1376, 207, 3264, 451, 7777, 983, 18581, 2179, 44526, 4850, 106936, 10905 and is new to the OEIS, i.e. it has so far most likely not appeared in any other context. It was submitted to the OEIS in the course of this manuscript and has recently been published \cite[Sequence A344613]{OEIS}.


\section{Extremal values and trees of the \textit{J} index} \label{sec:J}

In this section, we want to take a brief look at the $J$ index. It is the main aim of this section to show that the extremal trees and values of the $J$ index completely coincide with the extremal values and trees of the symmetry nodes index. However, in order to see that this is indeed surprising, we first show in the following subsection that the two indices can actually differ quite a bit.

\subsection{Differences between the \textit{SNI} and the \textit{J} index}

While the definitions of $SNI$ and $J$ already suggest that they are very related, they can actually be quite different. For instance, consider the two trees with 16 leaves depicted in Figure \ref{fig:SNIJdiff}. These two trees only differ in one of their maximum pendant subtrees, but it can be easily verified that here, $SNI$ and $J$ rank the trees differently: $SNI(T_1)=6<7=SNI(T_2)$, but $J(T_1)=6>5=J(T_2)$. So concerning $SNI$, $T_1$ is strictly more balanced than $T_2$, but concerning $J$, $T_2$ is strictly more balanced than $T_1$. The following theorem states that the presented example is minimal concerning the number of leaves.

\begin{figure}
  \centering
  \includegraphics[width=0.85\textwidth]{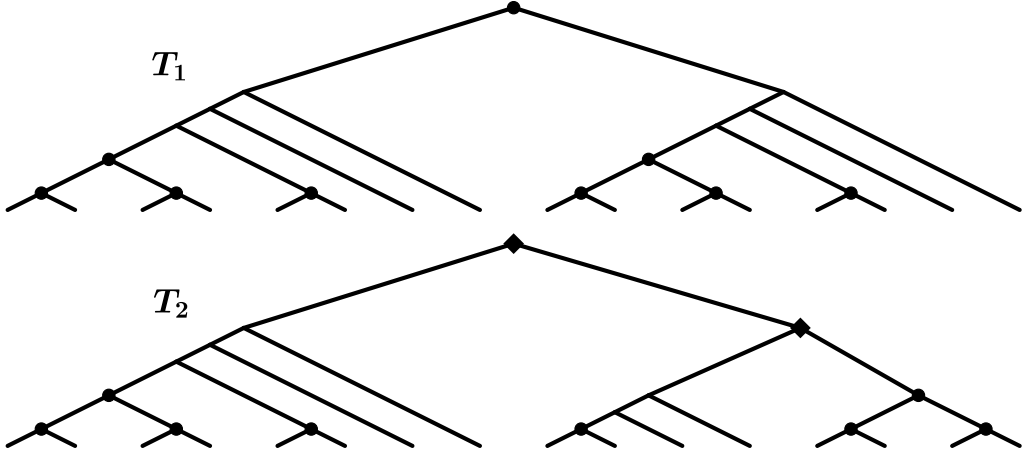}
  \caption{$T_1$ and $T_2$ are rooted binary trees with 16 leaves each. Symmetry nodes are highlighted with circles, while balanced nodes that are \emph{not} symmetry nodes are highlighted as diamonds. It can be easily verified that $SNI(T_1)=6<7=SNI(T_2)$ and that $J(T_1)=6>5=J(T_2)$. So according to $SNI$, $T_1$ is more balanced than $T_2$, whereas according to $J$, it is vice versa. 
  Note that this example is particularly interesting, as here, the ranking induced by $SNI$ differs not only from that induced by $J$, but also from that induced by other established balance indices like Sackin, Colless and Total Cophenetic, which  all regard $T_2$ as more balanced than $T_1$, while $mCI$ cannot distinguish between the two trees.}
  \label{fig:SNIJdiff}
\end{figure}

\begin{theorem}\label{thm:diffJSNI} Let $T_1$ and $T_2$ be two rooted binary leaves on $n \in \mathbb{N}$ leaves such that $SNI(T_1)<SNI(T_2)$ and $J(T_1)>J(T_2)$ or vice versa. Then, $n\geq 16. $ Moreover, for every $n \geq 16$, there exist such trees $T_1$ and $T_2$.
\end{theorem}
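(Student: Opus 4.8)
The plan is to prove the two assertions separately, translating everything into statements about symmetry and balanced nodes via $SNI(T)=(n-1)-s(T)$ and $J(T)=(n-1)-b(T)$. Assuming without loss of generality the first direction, the hypotheses $SNI(T_1)<SNI(T_2)$ and $J(T_1)>J(T_2)$ are equivalent to $s(T_1)>s(T_2)$ together with $b(T_1)<b(T_2)$, i.e. $s(T_1)\geq s(T_2)+1$ and $b(T_1)\leq b(T_2)-1$.

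For the lower bound I would introduce, for each tree $T$, the quantity $d(T):=b(T)-s(T)$. Since every symmetry node is balanced, $d(T)$ counts exactly the inner nodes that are balanced but not symmetry nodes. Subtracting the two translated inequalities gives the key reduction $d(T_2)-d(T_1)=(b(T_2)-b(T_1))+(s(T_1)-s(T_2))\geq 1+1=2$, so $d(T_2)\geq 2$: the tree $T_2$ must contain at least two balanced-but-not-symmetry nodes. The decisive local observation is then that any such node has two maximal pendant subtrees of equal size $m$ but different shapes, which forces $WE(m)\geq 2$ and hence $m\geq 4$ (because $WE(1)=WE(2)=WE(3)=1$); thus the subtree rooted at such a node already carries at least $8$ leaves. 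I would finish with a short case distinction on two such nodes $u,w$ of $T_2$: if neither is an ancestor of the other their subtrees are disjoint, contributing at least $8+8=16$ distinct leaves; if $u$ is an ancestor of $w$, then one of the two equal-sized halves below $u$ contains the $\geq 8$-leaf subtree of $w$, so each half has $\geq 8$ leaves and the subtree below $u$ alone has $\geq 16$ leaves. In either case $n\geq 16$.

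For the existence part I would take as base case the pair $(T_1,T_2)$ on $16$ leaves from Figure \ref{fig:SNIJdiff}, where $SNI(T_1)=6<7=SNI(T_2)$ and $J(T_1)=6>5=J(T_2)$, and produce a disagreeing pair on $n$ leaves for every $n\geq 17$ by grafting. Fixing a tree $R$ on $n-16$ leaves, I set $T_i':=(T_i,R)$. By Lemma \ref{lem:recursion}, $SNI(T_i')=SNI(T_i)+SNI(R)+\delta_i$, where $\delta_i=0$ iff the new root is a symmetry node, i.e. iff $T_i$ and $R$ are isomorphic; the analogous identity $J(T_i')=J(T_i)+J(R)+\varepsilon_i$, with $\varepsilon_i=0$ iff the root is balanced (iff $T_i$ and $R$ have equally many leaves), holds by the same partition of the inner nodes used in Lemma \ref{lem:recursion}. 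For $n\neq 32$ the tree $R$ has a leaf count different from $16$, so the root is neither balanced nor a symmetry node for either $T_i'$, giving $\delta_1=\delta_2=\varepsilon_1=\varepsilon_2=1$; for the single value $n=32$ I would instead pick $R$ on $16$ leaves with $R$ isomorphic to neither $T_1$ nor $T_2$ (possible as there are far more than two tree shapes on $16$ leaves), so that the root is balanced but not a symmetry node in both, giving $\varepsilon_1=\varepsilon_2=0$ and $\delta_1=\delta_2=1$. In every case the correction terms agree for $i=1,2$, so $SNI(T_1')-SNI(T_2')=SNI(T_1)-SNI(T_2)<0$ and $J(T_1')-J(T_2')=J(T_1)-J(T_2)>0$, and the disagreement is preserved.

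I expect the lower bound to be the main obstacle, and within it the crucial step is the arithmetic reduction showing $d(T_2)\geq 2$: only after forcing \emph{two} balanced-but-not-symmetry nodes in $T_2$ does the ``at least $8$ leaves per such node'' counting yield the bound $n\geq 16$. The grafting argument for existence is then routine once the base example is fixed, the only care being the bookkeeping for the exceptional value $n=32$.
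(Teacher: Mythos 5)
Your proof is correct and takes essentially the same approach as the paper: the same arithmetic reduction forcing two balanced-but-not-symmetry nodes in $T_2$ (the paper chains $SNI(T_2)>SNI(T_1)\geq J(T_1)>J(T_2)$ where you subtract the translated inequalities and use $d(T_1)\geq 0$), the same local count that each such node has two equal-sized, non-isomorphic pendant subtrees of size at least $4$ and hence at least $8$ descendant leaves, and the same idea of extending the $16$-leaf example of Figure \ref{fig:SNIJdiff} by grafting at the root. The only cosmetic differences are that the paper organizes the case analysis by choosing a deepest such node while you handle the ancestor case directly, and the paper iterates a single-leaf attachment (which automatically avoids your exceptional case $n=32$) whereas you graft an arbitrary tree $R$ in one step and patch $n=32$ separately.
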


Before we can prove Theorem \ref{thm:diffJSNI}, we first need to state the following lemma.

\begin{lemma} \label{lem:specialnodes}
Let $T$ be a rooted binary tree on $n$ leaves such that $b(T)\geq s(T)+2$, or, in other words, $J(T)\leq SNI(T)-2$. Then, $n \geq 16$.
\end{lemma}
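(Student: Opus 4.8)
The plan is to find, for a rooted binary tree $T$, the minimal number of leaves needed to accommodate a node that is balanced but not a symmetry node, and then to observe that we actually need \emph{two} such discrepancies. The key observation is that a node $v$ is balanced but not symmetric precisely when its two maximal pendant subtrees have equal size but non-isomorphic shapes. So first I would ask: what is the smallest size $m$ of a subtree rooted at such a node $v$? This requires two non-isomorphic rooted binary trees of the same size $k$, where $m = 2k$. Since $WE(k)$ counts rooted binary trees on $k$ leaves, we need the smallest $k$ with $WE(k) \geq 2$; as $WE(1)=WE(2)=WE(3)=1$ and $WE(4)=2$, the smallest such $k$ is $4$, giving a minimal subtree size of $m = 2 \cdot 4 = 8$ for a single balanced-but-not-symmetric node.

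Next I would leverage the hypothesis $b(T) \geq s(T)+2$, which says there are at least two inner nodes that are balanced but not symmetry nodes. Call them $v_1$ and $v_2$, with subtrees of sizes $m_1, m_2 \geq 8$ by the argument above. The heart of the argument is a counting bound on how these two ``defective'' nodes can be arranged within $T$. Here I would split into two cases according to whether $v_1$ and $v_2$ lie on a common root-to-leaf path (one is an ancestor of the other) or lie in disjoint subtrees. In the disjoint (incomparable) case, the two subtrees of size at least $8$ are leaf-disjoint, so $T$ already has at least $8+8 = 16$ leaves, giving $n \geq 16$ immediately.

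The main obstacle will be the nested case, where $v_2$ is an ancestor of $v_1$ (or vice versa). Here I cannot simply add the subtree sizes. Instead I would argue about $v_2$, the higher node: it is balanced, so its two maximal pendant subtrees have the \emph{same} size $s$, and one of them (the one containing $v_1$) has size at least $m_1 \geq 8$; hence both sides have size at least $8$, so the subtree at $v_2$ has at least $8+8=16$ leaves, and therefore $n \geq 16$. One subtlety I would need to handle carefully is that $v_1$ being the \emph{child side} of $v_2$ means the side containing $v_1$ has size $\geq m_1$, but I must confirm $v_1$ being balanced-but-not-symmetric forces its own subtree to have size $\geq 8$ and that this subtree sits strictly inside one child of $v_2$, so that child has size $\geq 8$; balance of $v_2$ then forces the sibling to also have size $\geq 8$. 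A boundary issue is whether $v_1$ could equal a child of $v_2$ rather than a proper descendant, but in either reading the side containing $v_1$ has size at least $8$, so the bound survives.

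In all cases I obtain $n \geq 16$, which is exactly the claim. I would present the proof by first establishing the size-$8$ lower bound for one defective node (invoking $WE(4)=2$ from the preliminaries as the source of the smallest pair of non-isomorphic equal-size trees), then dispatching the incomparable and comparable cases for the two defective nodes guaranteed by $b(T)\geq s(T)+2$. The only genuinely delicate point is making the comparability dichotomy exhaustive and verifying that the nested case really does force both sides of the ancestor to be large via the balance condition; the incomparable case and the single-node size bound are routine once the key quantity $WE(4)=2$ is identified.
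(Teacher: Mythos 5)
Your proposal is correct and follows essentially the same route as the paper's proof: both first show that any balanced-but-non-symmetric node has at least $8$ descendant leaves (because two non-isomorphic equal-size pendant subtrees each need at least $4$ leaves, as there is only one rooted binary tree shape for each $k \leq 3$, i.e.\ $WE(1)=WE(2)=WE(3)=1$), and both then split on whether the two guaranteed defective nodes are incomparable (disjoint subtrees, sizes add to $16$) or nested (balance of the ancestor forces its sibling subtree to also have at least $8$ leaves). The only cosmetic difference is that the paper organizes the case distinction by choosing a defective node of maximal depth, whereas you argue directly on comparability; the substance is identical.
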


\begin{proof}
As $b(T)\geq s(T)+2$, $T$ must contain at least two inner vertices $v_1$, $v_2$ which are balanced but non-symmetric, i.e. their maximal pendant subtrees are non-isomorphic but have the same number of leaves. We first consider all these vertices and choose one of maximal depth, i.e. of maximal distance from the root. As explained, this vertex $v$ has two maximal pendant subtrees that do not share the same tree shape, but have the same number of leaves. As for all $n \leq 3$ there is only one rooted binary tree, we know that each of the maximal pendant subtrees of $v$ has at least four leaves; so in total, $v$ has at least eight descending leaves. 

However, there is at least one more balanced but non-symmetric vertex, say $u$. As $v$ was chosen with maximal depth, $u$ cannot be a descendant of $v$ -- it can either be an ancestor or a node that is not on the path from $\rho$ to $v$. In the first case, as $u$ is balanced, it must have a second maximal pendant subtree that contains the same number of leaves as the subtree that contains $v$, so at least eight. So in this case, in total we have at least 16 leaves. 
If, however, $u$ is not on the path from $\rho$ to $v$, by the same arguments used above for $v$, each maximal pendant subtree of $u$ must have at least four leaves. So $u$, too, has at least eight descending leaves, which again gives a total of at least 16 leaves. This proves the lower bound.
\end{proof}

\begin{remark} The bound stated by Lemma \ref{lem:specialnodes} is tight, as can be seen by considering tree $T_2$ from Figure \ref{fig:SNIJdiff}.
\end{remark}

We are now in a position to prove Theorem \ref{thm:diffJSNI}.

\begin{proof}[Proof of Theorem \ref{thm:diffJSNI}] Without loss of generality, let $T_1$ and $T_2$ with $n$ leaves each be such that $SNI(T_1)<SNI(T_2)$ and $J(T_1)>J(T_2)$ (else swap the roles of $T_1$ and $T_2$). Recalling that every symmetry node is also a balanced node, i.e. using that $J(T)\leq SNI(T)$ for all rooted binary trees $T$, we immediately derive: \[ SNI(T_2) > SNI(T_1) \geq J(T_1) > J(T_2),\]
which immediately shows that $SNI(T_2)-J(T_2)\geq 2$ and thus $J(T_2) \leq SNI(T_2)-2$. By Lemma \ref{lem:specialnodes}, this proves that $n \geq 16$, which completes te first part of the proof. 

In order to see that for all $n\geq 16$ we have pairs of trees $T_1$ and $T_2$ whose ranking by $SNI$ and $J$ is exactly reversed, note that the example from Figure \ref{fig:SNIJdiff} can be extended to any number $n \geq 16$ of leaves by extending the trees at the root: Take the trees from the figure, attach a new root at the top with a single leaf as the second maximal pendant subtree. These trees will have the exact same $SNI$ and $J$ values as $T_1$ and $T_2$, but they will have one more leaf each. This procedure can be repeated arbitrarily often, which shows that such examples exist for all $n$.
\end{proof}

So Theorem \ref{thm:diffJSNI} shows that our example from Figure \ref{fig:SNIJdiff} is minimal. However, it is by far not unique: We performed an exhaustive search through all $\binom{10,905}{2}=59,454,060$ pairs of distinct rooted binary trees with 16 leaves and found in total 23,077 pairs of trees for which the rankings of $J$ and $SNI$ are inverted. 

However, possible inversions in the rankings induced by $SNI$ and $J$ are not the only way to show that these indices are rather different. In fact, they even differ for practical purposes, for instance regarding their power to correctly assess trees that are more or less imbalanced. In order to show this (similar to other studies \cite{heard_patterns_1992, kirkpatrick_searching_1993, agapow_power_2002, blum_statistical_2005, heard_shapes_2007}), we used the so-called Yule model as our null hypothesis. This model describes a pure birth process and can be used to construct rooted binary trees as follows: Starting with a single leaf, the Yule process splits one leaf at a time to form a cherry, stopping as soon as the desired number of leaves is reached. In the Yule model, every leaf edge has the same splitting rate $\lambda$, and thus each leaf has the same probability to be chosen for a splitting event \cite{yule_mathematical_1925, harding_probabilities_1971, kendall_generalized_1948}. Our aim was to explore how well $SNI$ and $J$ can detect trees that are not generated under the Yule model but under the influence of an imbalance factor $\zeta$. \par
As alternative and balance affecting models we used two different models that are similar to the Yule process, also start with a single leaf (with rate 1), but then vary the rates at which the subsequent leaves split. The IF model simulates inherited fertility by having the two new leaves that result from a splitting event inherit their parent's rate $\lambda_p$ multiplied by factor $\zeta$. The ASB model simulates an age-step-based change in rates by assigning rate 1 to the leaves of the newly formed cherry and multiplying the rates of all other leaves by $\zeta$ at each step. Both models can create a higher degree in asymmetry as well as symmetry, the IF model generates more balanced trees for $\zeta<1$ and more imbalanced for $\zeta>1$, while for the ASB model it is vice-versa. For $\zeta=1$ both models match the Yule model. \par
We performed a two-sided test with a level of significance of $5\%$, using the 0.025- and 0.975-quantile to mark the borders of the critical region. To estimate the distribution and quantiles we simulated 10,000 trees with $n=100$ leaves under the Yule model. The power for several choices of $\zeta$ was estimated by evaluating 1,000 trees with $n=100$ leaves under each model and $\zeta$-value. All results can be seen in Table \ref{tab:power}. The  $SNI$ and $J$ models each outperform the respective other model for certain ranges of $\zeta$. The symmetry nodes index, for instance, is better suited for recognizing trees generated under the ASB model than Rogers $J$. Rogers $J$, on the other hand, can detect more balanced trees generated under the IF model ($\zeta<1$) more reliably than the symmetry nodes index.

\renewcommand{\arraystretch}{1.5}
{\begin{table}[H]
    \centering
  	\caption[Power of $SNI$ and $J$]{Power of $SNI$ and $J$ to correctly recognize trees generated under the $IF$ or $ASB$ model. If the power of an index is clearly higher than the other's, it is marked in bold.}
  	\vspace{0.15cm}
   	\begin{tabularx}{0.66\textwidth}{llllllll} 
   		\toprule
    	IF $\zeta$-values & 0.25 & 0.5 & 0.75 & 1 & 1.25 & 1.5 & 1.75\\
   		\midrule
        $SNI$ & 0.882 & 0.416 & 0.103 & 0.025 & \textbf{0.143} & \textbf{0.502} & \textbf{0.857} \\ 
        $J$ & \textbf{0.918} & \textbf{0.48} & \textbf{0.125} & 0.029 & 0.12 & 0.49 & 0.85 \\ 
   		\midrule
        ASB $\zeta$-values & 0.955 & 0.97 & 0.985 & 1 & 1.015 & 1.03 & 1.045\\
   		\midrule
        $SNI$ & \textbf{0.485} & \textbf{0.314} & 0.103 & 0.044 & 0.091 & \textbf{0.323} & \textbf{0.667}\\
        $J$ & 0.429 & 0.24 & 0.1 & 0.044 & 0.092 & 0.276 & 0.614\\
 		\bottomrule
  	\end{tabularx}
  \label{tab:power}
\end{table}
\renewcommand{\arraystretch}{1}}

\subsection{Extremal properties of the \textit{J} index}

Having shown that $SNI$ and $J$ can be so different, we are now in the position to turn our attention to their extremal properties, which surprisingly will turn out to coincide.

As before, we start with the maximum.

\begin{theorem} \label{the:catuniqueJ}Let $n \in \mathbb{N}_{\geq 2}$. Then, we have: $max_{J}(n)=n-2$, and this value is uniquely achieved by the caterpillar tree. 
\end{theorem}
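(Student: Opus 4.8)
The plan is to mirror the structure of the maximum-value proofs for $SNI$ (Theorem~\ref{thm:catmax}) and $mCI$ (Theorem~\ref{the:catunique}), since the statement to be proven is literally the same assertion for the $J$ index. The key observation is that for the caterpillar tree $T_n^{cat}$, exactly one inner vertex is balanced. First I would note that every rooted binary tree with at least two leaves contains at least one cherry, and the parent of a cherry is always balanced (its two maximal pendant subtrees are single leaves, hence both have exactly one leaf). Thus every such tree has $b(T)\geq 1$, giving the upper bound $J(T)=(n-1)-b(T)\leq n-2$.

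Next I would show that the caterpillar attains this bound, i.e. $J(T_n^{cat})=n-2$. The caterpillar has a unique cherry, and the only balanced node is its parent: every other inner vertex $v$ of the caterpillar has one maximal pendant subtree that is a single leaf and another that is strictly larger (a caterpillar on at least two leaves), so the two subtrees have different leaf counts and $v$ is not balanced. Hence $b(T_n^{cat})=1$ and $J(T_n^{cat})=(n-1)-1=n-2=max_J(n)$.

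The part that requires a little more care is \emph{uniqueness}, i.e. showing that the caterpillar is the only tree achieving $J=n-2$. This is equivalent to showing that $T_n^{cat}$ is the unique tree with exactly one balanced node. I would argue that any non-caterpillar tree has at least two cherries (since the caterpillar is the unique tree with a single cherry), and each cherry parent is balanced, so such a tree already has $b(T)\geq 2$ and therefore $J(T)\leq n-3 < n-2$. This is the main obstacle to watch, but it resolves cleanly: the uniqueness reduces entirely to the fact, already recorded in the preliminaries, that the caterpillar is the unique rooted binary tree with only one cherry, combined with the observation that distinct cherries have distinct (balanced) parent nodes. Putting these together gives $max_J(n)=n-2$, uniquely achieved by the caterpillar, completing the proof.
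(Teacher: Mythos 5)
Your proof is correct and takes essentially the same approach as the paper: the upper bound $J(T)\leq n-2$ follows from the existence of at least one cherry (whose parent is a balanced node), and uniqueness follows because any non-caterpillar tree has at least two cherries, hence at least two balanced nodes, while in $T_n^{cat}$ every inner vertex other than the cherry parent has pendant subtrees of sizes $1$ and at least $2$. Your write-up is, if anything, slightly more explicit than the paper's on the attainment and uniqueness steps, but the argument is the same.
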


\begin{proof} Let $T$ be a rooted binary tree with $n\geq 2$ leaves. Using the fact that $|\mathring{V}|=n-1$, we get: 
\begin{align*}
J(T)& = \sum\limits_{v \in \mathring{V}} (1-\delta(v_a,v_b))  =\sum\limits_{v \in \mathring{V}} 1 - \sum\limits_{v \in \mathring{V}}\delta(v_a,v_b) \\& = (n-1) - \sum\limits_{v \in \mathring{V}}\delta(v_a,v_b)\leq (n-1)-1=n-2,
\end{align*}
where $v_a$ and $v_b$ denote the children of $v$, and where the inequality is due to the fact that every rooted binary tree $T$ with at least two leaves has at least one cherry and thus at least one balanced node, which implies    $\sum\limits_{v \in \mathring{V}}\delta(v_a,v_b)\geq 1$. Equality is achieved if and only if the tree has only one balanced node, so if and only if there is only one cherry and the parent of this cherry is the only symmetry node. This is true if and only if $T$ equals $T_n^{cat}$, which completes the proof.            
\end{proof}

The following theorem concerning the minimum of the $J$ index is surprisingly a mere consequence of Theorem \ref{the:min_sym}, even though the $J$ and $SNI$ indices can be very different (cf. Figure \ref{fig:SNIJdiff}).
 
\begin{theorem} \label{the:min_J} Let $n\in \mathbb{N}$ and let $T$ be a rooted binary tree with $n$ leaves. Then, we have   $min_{J}(n)=wt(n)-1$. Moreover, we have $J(T)=min_{J}(n)=wt(n)-1$ if and only if $T\in \mathcal{T}_n$, i.e. if and only if $T$ is a rooted binary weight tree.
\end{theorem}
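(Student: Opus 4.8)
The plan is to establish both the value $\min_J(n)=wt(n)-1$ and the characterization of minimizers by leveraging Theorem \ref{the:min_sym} together with the universal inequality $J(T)\leq SNI(T)$, which holds because every symmetry node is balanced (so $s(T)\leq b(T)$, equivalently $J(T)=(n-1)-b(T)\leq (n-1)-s(T)=SNI(T)$). First I would prove the lower bound $J(T)\geq wt(n)-1$ for \emph{every} rooted binary tree $T$ on $n$ leaves. This is the heart of the argument, and it parallels the structure already used for $SNI$: I would show it by induction on $n$ using the recursion for $J$ analogous to Lemma \ref{lem:recursion}, namely $J(T)=J(T_a)+J(T_b)+\varepsilon_T$ where $\varepsilon_T=0$ if $\rho$ is balanced and $1$ otherwise. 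In the inductive step, if $\rho$ is not balanced then $J(T)=J(T_a)+J(T_b)+1\geq (wt(n_a)-1)+(wt(n_b)-1)+1 = wt(n_a)+wt(n_b)-1\geq wt(n)-1$ by the inductive hypothesis and Lemma \ref{lem_wt}. If $\rho$ \emph{is} balanced, then $n$ is even and $n_a=n_b=n/2$, so $wt(n)=wt(n/2)$ and $J(T)=J(T_a)+J(T_b)\geq 2(wt(n/2)-1)=2\,wt(n/2)-2\geq wt(n)-1$, since $2\,wt(n/2)-2\geq wt(n/2)-1$ holds precisely when $wt(n/2)\geq 1$, which is always true for $n\geq 2$.

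Having the lower bound $J(T)\geq wt(n)-1$ in hand, I would combine it with the chain $wt(n)-1\leq J(T)\leq SNI(T)$. For any rooted binary weight tree $T\in\mathcal{T}_n$, Proposition \ref{prop:Ttilde} gives $SNI(T)=wt(n)-1$, so the sandwich forces $J(T)=wt(n)-1$ as well. This simultaneously shows that $\min_J(n)=wt(n)-1$ (the bound is attained) and that every tree in $\mathcal{T}_n$ is a $J$-minimizer. Thus one direction of the equivalence is immediate from the $SNI$ results, with essentially no extra work.

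The remaining and more delicate task is the converse: showing that any $J$-minimizer lies in $\mathcal{T}_n$. Here the hard part will be that $J$-minimality is \emph{a priori} a weaker condition than $SNI$-minimality, so I cannot directly quote Theorem \ref{the:min_sym}. The approach I would take is again induction on $n$. Suppose $T$ attains $J(T)=wt(n)-1$. I would argue that both maximal pendant subtrees $T_a,T_b$ must themselves be $J$-minimal (otherwise, replacing a non-minimal one by a minimizer of the same size strictly decreases $J(T)$ via the recursion, contradicting minimality), and then trace through the inequality chain in the lower-bound proof to show every inequality must be tight. In the case $\rho$ balanced with $n$ even, tightness of $2\,wt(n/2)-2\geq wt(n/2)-1$ forces $wt(n/2)=1$, i.e. $n$ is a power of $2$; the inductive hypothesis then makes $T_a,T_b$ rooted binary weight trees for $n/2=2^{k-1}$, hence both equal $T^{fb}_{k-1}$, so $T=T^{fb}_k=\widetilde T_n\in\mathcal{T}_n$. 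In the case $\rho$ not balanced, tightness of Lemma \ref{lem_wt} forces $wt(n_a)+wt(n_b)=wt(n)$, meaning the binary expansions of $n_a$ and $n_b$ partition the $1$-bits of $n$; combined with the inductive hypothesis that $T_a\in\mathcal{T}_{n_a}$ and $T_b\in\mathcal{T}_{n_b}$, this reassembles $T$ as a weight tree for $n$ exactly as in the final paragraph of the proof of Theorem \ref{the:min_sym}. The main obstacle is verifying that all the relevant inequalities are indeed forced to be equalities, and in particular handling the balanced-root case cleanly, since that is where $J$ and $SNI$ could conceivably diverge; I expect the slack $2\,wt(n/2)-2$ versus $wt(n/2)-1$ to be exactly what rules out any non-power-of-two tree with a balanced root from being a minimizer.
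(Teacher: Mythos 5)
Your proof is correct, but it takes a genuinely different route from the paper's. The paper never proves a lower bound for $J$ by induction; instead it reduces $J$-minimality to $SNI$-minimality by an exchange argument: assuming a $J$-minimal tree with $J(T)<wt(n)-1$, it picks a deepest balanced-but-non-symmetric node $v$, overwrites the symmetry-poorer of its two maximal pendant subtrees with a copy of the other (propagating the replacement to all isomorphic copies of that subtree elsewhere in $T$), checks that this keeps $b(T)$ fixed while strictly increasing $s(T)$, and iterates until it obtains a tree with $SNI<wt(n)-1$, contradicting Theorem \ref{the:min_sym}; the same argument shows every $J$-minimizer satisfies $b(T)=s(T)$, so the characterization of minimizers is inherited wholesale from the $SNI$ theorem. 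You instead rerun the inductive scheme of Theorem \ref{the:min_sym} directly for $J$: a $J$-analogue of Lemma \ref{lem:recursion} (valid because balancedness is a local property) plus Lemma \ref{lem_wt} gives $J(T)\geq wt(n)-1$ for \emph{all} trees, attainment follows from $J\leq SNI$ and Proposition \ref{prop:Ttilde}, and the converse follows by forcing tightness in every inequality. Two features of your route deserve mention. First, your subtree-replacement step is cleaner for $J$ than the corresponding step is for $SNI$, since whether the root is balanced depends only on the leaf counts $(n_a,n_b)$ and is therefore preserved under replacement; this is exactly why you need no analogue of Lemma \ref{lem:rhoissym}, the balanced-root case being dispatched by the arithmetic $wt(n)=wt(n/2)$ together with $2\,wt(n/2)-2=wt(n/2)-1\Rightarrow wt(n/2)=1$, which correctly forces $n=2^k$ and $T=T^{fb}_k$. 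Second, in the unbalanced-root case you invoke the equality condition of Lemma \ref{lem_wt} ($wt(n_a)+wt(n_b)=wt(n)$ iff the addition has no carries, i.e. the $1$-bits of $n_a$ and $n_b$ partition those of $n$); this is indeed established at the end of the paper's proof of that lemma and is also how the paper's own proof of Theorem \ref{the:min_sym} reassembles the weight tree, so your appeal is legitimate, though you should cite it explicitly. As for what each approach buys: the paper's is shorter given Theorem \ref{the:min_sym} and makes explicit the structural fact that in a $J$-minimal tree all balanced nodes are symmetry nodes, at the price of the somewhat delicate copy-and-propagate bookkeeping; yours is self-contained, avoids the exchange argument entirely, and would transfer to any index obeying the same recursion. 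The only polish needed is to state and prove the recursion $J(T)=J(T_a)+J(T_b)+\varepsilon_T$ and to note that a balanced root forces $n$ even with $n_a=n_b=n/2$ — both immediate.
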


Note that Theorem \ref{the:min_J} is surprising as it particularly implies that in a tree with the maximal number of balanced nodes, \emph{all} balanced nodes must be symmetry nodes. We will prove this theorem now.

\begin{proof} Note that as every symmetry node is in particular also a balanced node, we automatically have $min_J(n) \leq min_{SNI}(n)$ for all $n$, and we know from Theorem \ref{the:min_sym} that $min_{SNI}(n)=wt(n)-1$. So it remains to be shown that $min_J(n)\geq wt(n)-1$. Suppose that this is not the case, i.e. assume there exists a rooted binary tree on $n$ leaves with $J(T)=min_J(n)$ and such that $J(T)< wt(n)-1$. This implies that $T$ is minimal concerning $J$ and $T$ must contain at least one balanced node that is not a symmetry node. We take one such node $v$ that has no descendants that also have this property, i.e. we take such a node whose distance to the root is maximal. Let the maximal pendant subtrees below $v$ be $T_{v_a}$ and $T_{v_b}$. Then, as $v$ is balanced, these two trees have the same number of leaves. But as $v$ is not a symmetry node, they are not isomorphic. Let $s=\max\{s(T_{v_a}),s(T_{v_b})\}$ be the maximum number of symmetry nodes in either one of these subtrees. Note that $s$ coincides with the maximum number of balanced nodes in either one of these subtrees (otherwise we could find a descendant of $v$ which is a balanced node but not a symmetry node -- a contradiction to the choice of $v$). We construct a tree $T'$ by replacing the subtree with fewer symmetry nodes, say $T_{v_b}$, by a copy of the other one, say $T_{v_a}$ (if both trees have the same number of symmetry nodes, we arbitrarily replace one of them by a copy of the other one). This way, in $T'$, $v$ is a symmetry node. Moreover, in case $v$ is itself a descendant of one or more symmetry nodes, this implies that there are possibly other subtrees of $T$ with copies of $v$. In this case, we perform the same replacement there as we did below $v$, i.e. for instance we replace the other copies of $T_{v_b}$ in each such subtree by a copy of $T_{v_a}$, too (or vice versa). This gives a resulting tree $T''$.

Note that this implies $s(T'')>s(T)$ and thus $SNI(T'')<SNI(T)$, because we have gained at least one symmetry node on the way from $T$ to $T''$ (namely $v$), but we cannot have lost one (because we copied the subtree with more symmetry nodes -- possibly even multiple times if $v$ was the descendant of a symmetry node in $T$). Moreover, the number of balanced nodes must be unchanged between $T$ and $T''$ (as in the subtrees we copied, balanced nodes and symmetry nodes coincide due to the choice of $v$, so we replaced the subtree with potentially fewer balanced nodes by one with potentially more balanced nodes. So we have $J(T'')\leq J(T)$, but as $T$ was chosen to be minimal concerning $J$, we have equality.). So we have $J(T'')= J(T) < wt(n)-1$. Now we repeat this procedure (starting with $T''$ and so on) as long as we have a balanced node that is not a symmetry node, and in each step $SNI$ strictly decreases by at least one, until we find a tree $T^*$ with $SNI(T^*)<wt(n)-1$. This is a contradiction to Theorem \ref{the:min_sym}, which shows that the assumption was wrong, i.e.  thus such a tree $T$ cannot exist. Thus, we have $min_J(n)=wt(n)-1$, and no tree that is minimal concerning $J$ can contain any balanced node that is not a symmetry node. This shows that all trees minimizing $J$ must be binary weight trees, i.e. they must be contained in $\mathcal{T}_n$. This completes the proof.
\end{proof}

\section{Comparison of extremal trees concerning \textit{SNI} and \textit{mCI}}\label{sec:comparison}

In the previous sections, we have already seen that $T_n^{cat}$ is the unique tree maximizing  $SNI$, $J$ and $mCI$, so the sets of maximal trees coincide and contain only one tree.

However, as we have also seen, the number of minimal trees is higher than 1 for most $n$ for all these indices. But even in this case, $\mathcal{T}_n$ and $\widehat{\mathcal{T}}_n$ are related, as the following theorem shows.

\begin{theorem} \label{thm:setcontainment}
Let $n \in \mathbb{N}$. Then we have: $\mathcal{T}_n \subseteq \widehat{\mathcal{T}}_n$, i.e. every tree with minimal $SNI$ (and $J$) value also has minimal $mCI$ value.
\end{theorem}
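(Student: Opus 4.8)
The plan is to show the set containment $\mathcal{T}_n \subseteq \widehat{\mathcal{T}}_n$ by taking an arbitrary rooted binary weight tree $\widetilde{T} \in \mathcal{T}_n$ and verifying that it satisfies the defining property of $\widehat{\mathcal{T}}_n$ from Definition~\ref{def_minmCIset}, namely that all leaves (if $n$ is even) or all but one leaf (if $n$ is odd) belong to cherries. Equivalently, by Theorem~\ref{thm:mCImin}, it suffices to show that every $\widetilde{T} \in \mathcal{T}_n$ achieves the minimal value $mCI(\widetilde{T}) = MOD(n,2)$.

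**First I would** recall the structure of $\widetilde{T} \in \mathcal{T}_n$: it consists of a top tree $T_n^{top}$ whose $wt(n)$ leaves have been replaced by the fully balanced trees $T_i^{fb} \in S_n$ (one for each $i$ with $n_i = 1$ in the binary expansion of $n$). The key observation is how cherries behave in each component. Each fully balanced tree $T_i^{fb}$ of height $i \geq 1$ has $2^i$ leaves, \emph{all} of which lie in cherries, since its deepest internal nodes are precisely parents of cherries and $2^i$ is even. The only exceptional component is $T_0^{fb}$, which is a single leaf and appears in $S_n$ exactly when $n_0 = 1$, i.e. precisely when $n$ is odd. This single leaf is attached directly to the top tree and is the unique leaf not lying in a cherry.

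**The main point to verify** is that attaching these fully balanced subtrees to the leaves of $T_n^{top}$ does not destroy the cherries inside them, and that the internal edges of the top tree never create new cherries involving leaves. Since every leaf of the assembled tree $\widetilde{T}$ lies inside one of the attached subtrees $T_i^{fb}$, its cherry membership is determined entirely within that subtree: the parent of any leaf sits inside the subtree (as each $T_i^{fb}$ with $i \geq 1$ has height at least one), so the cherry structure of each $T_i^{fb}$ is preserved verbatim in $\widetilde{T}$. Therefore, counting across all components, exactly $n - n_0$ leaves lie in cherries when $n_0 \in \{0,1\}$, leaving $MOD(n,2)$ leaves uncovered. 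This gives $mCI(\widetilde{T}) = MOD(n,2) = min_{mCI}(n)$, so $\widetilde{T} \in \widehat{\mathcal{T}}_n$ by Theorem~\ref{thm:mCImin}.

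**The only real subtlety** is bookkeeping the odd case cleanly: one must argue that the lone leaf $T_0^{fb}$ is the \emph{sole} non-cherry leaf and that it cannot accidentally pair with another leaf to form a cherry. This follows because $T_0^{fb}$ is attached to an internal node of the top tree whose sibling subtree, being some other $T_i^{fb}$ with $i \geq 1$ or a larger portion of the top tree, has at least two leaves and thus is not a single leaf — so no cherry forms at that attachment point. I expect this verification to be the main (though still routine) obstacle; the even case is immediate since every attached component then has even size with all leaves in cherries. Once the leaf-coverage count is established, the containment is a direct appeal to the characterization in Theorem~\ref{thm:mCImin}, and the conclusion that every $SNI$-minimal (equivalently $J$-minimal, by Theorem~\ref{the:min_J}) tree is $mCI$-minimal follows immediately.
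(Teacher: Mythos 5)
Your proposal is correct and follows essentially the same route as the paper's own proof: observe that in a rooted binary weight tree every leaf lies inside some attached $T_i^{fb}$, so all leaves (for even $n$) or all but the lone $T_0^{fb}$ leaf (for odd $n$) belong to cherries, giving $mCI(T)=MOD(n,2)$, then invoke Theorem~\ref{thm:mCImin} and derive the $J$ claim from Theorem~\ref{the:min_J}. Your explicit verification that attachment preserves the cherries of each $T_i^{fb}$ and that the singleton leaf cannot form a cherry is a welcome elaboration of details the paper leaves implicit, but it is not a different argument.
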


\begin{proof} Let $n=\sum\limits_{i=0}^N n_i2^i \in \mathbb{N}$ and let $ T \in \mathcal{T}_n$, i.e. let $T$ be a rooted binary weight tree. By Definition \ref{def:Ttilde}, this implies that $T$ contains subtrees $T_i^{fb}$ precisely for each $i$ for which $n_i=1$ in the binary expansion of $n$. In particular, if $n$ is even and thus $n_0=0$, $T$ does not contain $T_0^{fb}$, which corresponds to a singleton leaf. Instead, all leaves belong to cherries. So in this case, we have $mCI(T)=0$, which is clearly minimal by Theorem \ref{thm:mCImin}. On the other hand, if $n$ is odd, $T$ does contain $T_0^{fb}$ and thus a single leaf, but all other leaves (if there are any) are contained in some $T_i^{fb}$ for $i > 0$. In particular, all but one leaf belong to cherries, which shows that $mCI(T)=1$, which again is minimal in this case due to Theorem \ref{thm:mCImin}. So in both cases, we have $T \in \widehat{\mathcal{T}}_n$. The corresponding statement for $J$ follows from Theorem \ref{the:min_J}. This completes the proof.
\end{proof}

Theorem \ref{thm:setcontainment} shows that $|\mathcal{T}_n|\leq |\widehat{\mathcal{T}}_n|$ for all $n$. Figure \ref{fig:diffNumberofMinima} shows just how drastic the difference can be -- this difference is particularly obvious whenever $n=2^k$ for some $k \in \mathbb{N}_{>2}$: In these cases, the tree minimizing $SNI$ and $J$ is unique, namely $T_k^{fb}$ for $k=\log_2(n)$, whereas there are multiple trees minimizing $mCI$. This is also something that can be criticized about $mCI$ (or, analogously, about $CI$) as we will point out in the next section. In this regard, $SNI$ and $J$ may be regarded as advantageous compared to $mCI$ as they lead to fewer but possibly more sensible most balanced trees.

\begin{figure}
    \centering
    \includegraphics[width=0.9\textwidth]{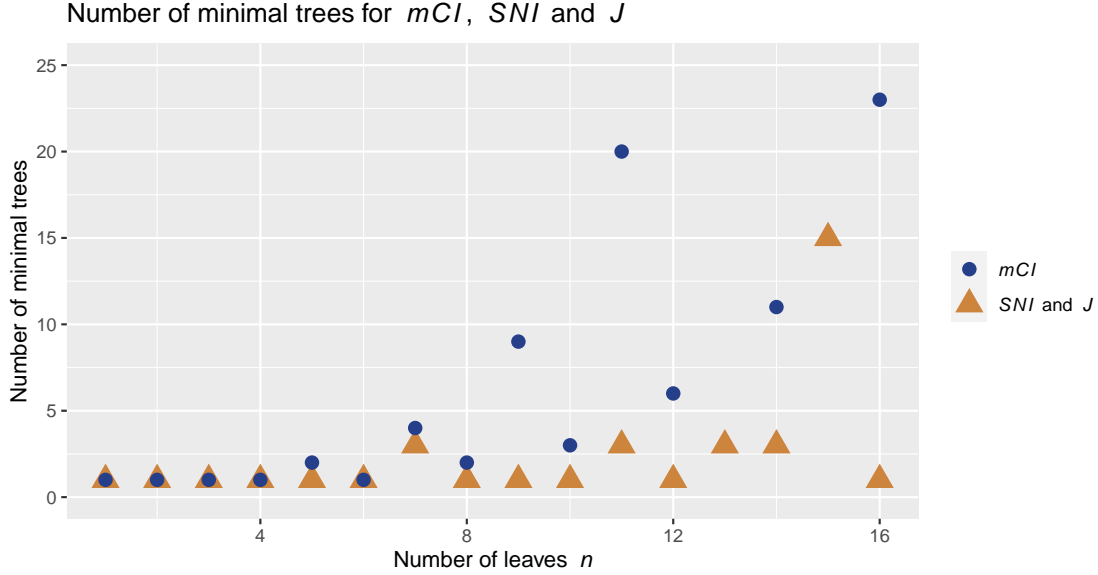}
    \caption{Comparison of the number of trees minimizing $SNI$ (and $J$) versus the number of trees minimizing $mCI$. It can easily be seen that whenever $n=2^k$ for some $k \in \mathbb{N}$, there is only one tree (namely $T_k^{fb}$), but this is not the case the $mCI$ whenever $k>2$. In this case, $T_k^{fb}$ is only one of multiple $mCI$ minimal trees. }
    \label{fig:diffNumberofMinima}
\end{figure}

\section{Discussion}

The cherry index is an established balance index which has its merits for instance in mathematical phylogenetics \cite{steel_phylogeny_2016}. Even though the number of cherries is \enquote{not a particularly discriminating measure of tree shape} \cite[p.~58]{steel_phylogeny_2016}, there are several positive aspects to it. For instance, the number of cherries can also be used as a balance index for unrooted trees \cite{fischer_balance_2015}, whereas other indices (like $SNI$ and $J$, which depend on pendant subtrees) do not have a direct unrooted counterpart. Moreover, the difference in the number of cherries between a rooted tree and its unrooted version is at most 1 (because inserting the root on an edge can destroy at most one cherry). So in this sense, the cherry index is relatively robust against root insertion and deletion.

Anyway, the cherry index (and thus also the modified cherry index) only takes into account a small fraction of the tree and therefore cannot fully describe its shape. For instance, in studies that compared the performance of balance indices with respect to different requirements, the performance of the cherry index was average at best \cite{blum_statistical_2005}. Even so, the cherry index was found to be the best choice for a second statistic -- a statistic that could be used in combination with a primary balance index to enhance the decision making when analyzing if a group of trees is more or less balanced than others \cite{matsen_geometric_2006}. This shows that the importance of the cherry index should not be diminished; the cherry index should rather be seen as a viable and helpful auxiliary tree statistic. In this regard, it is important to understand the trees that maximize and minimize it.

However, the main weakness of the cherry index is the fact that it totally disregards imbalance in the top tree (even in the rather obvious case when $n=2^k$) as long as all leaves are arranged in cherries. Figure \ref{fig:mintreenotunique} shows an example of a tree which intuitively most people would not call balanced, but which is maximally balanced according to the cherry index. Thus, the natural extension of $mCI$ via the symmetry nodes index makes sense -- it reduces the number of most balanced trees to the more plausible ones.

\begin{figure}
	\centering
	\includegraphics[width=0.7\textwidth]{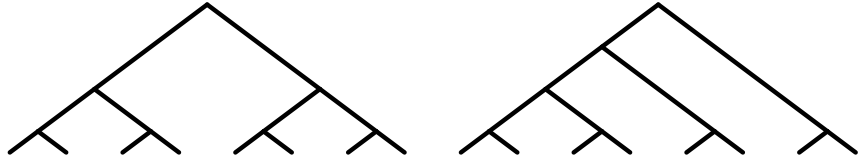}
	\caption{For $n=8$ there are two trees which minimize $mCI$. Both trees are built from four cherries and a top tree with four leaves ($T^{fb}_{2}$ and $T^{cat}_{4}$). The fact that the right tree in this figure is considered maximally balanced by $mCI$, which by most people is considered rather imbalanced, is a strong argument against the cherry index. For $J$ and $SNI$, on the other hand, the left tree would be the unique optimum in this case. }
	\label{fig:mintreenotunique}
\end{figure}

The aim of the $J$ index is similar to that of the symmetry nodes index, but its definition of balance is still less restricted than that of $SNI$. As explained before, Figure \ref{fig:SNIJdiff} shows that the rankings induced by the two indices can differ; the induced ordering of two trees can indeed be completely reversed. It is therefore rather surprising that we could show in the present manuscript that the extremal trees concerning the $J$ index coincide with the ones of the $SNI$ index. An interesting topic for future research is certainly the question of just how different these two indices can really be.

Another possible area of future research is the relationship between the new symmetry nodes index and other established indices. For instance, Figure \ref{fig:SackinIndecisive} shows an example of two trees $T_1$ and $T_2$, which the Sackin \cite{sackin_good_1972} and Colless \cite{colless_review_1982} indices consider as equally balanced, whereas the Total Cophenetic Index \cite{mir_new_2013} considers $T_2$ as more balanced. However, in this example, $SNI$, $mCI$ and $J$ all agree that $T_1$ is more balanced than $T_2$.

\begin{figure}
  \centering
  \includegraphics[width=0.8\textwidth]{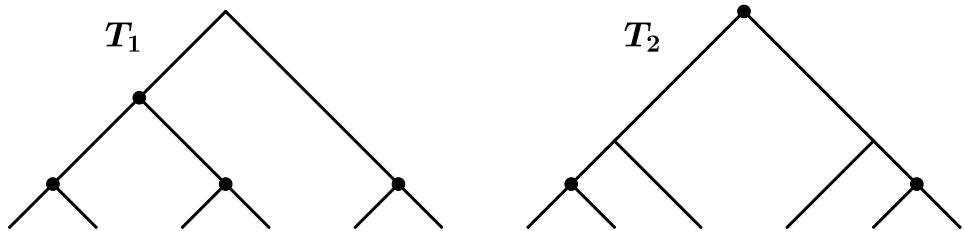}
  \caption{$T_1$ and $T_2$ are rooted binary trees with 6 leaves each. It can be easily verified that $SNI(T_1)=1<2=SNI(T_2)$, $mCI(T_1)=0<2=mCI(T_2)$ and $J(T_1)=1<2=J(T_2)$. So according to $SNI$, $mCI$ and $J$, $T_1$ is more balanced than $T_2$. However, there are other established balance indices that do not share this view. Sackin and Colless regard both trees as equally balanced, and the Total Cophenetic Index even regards $T_2$ as the unique most balanced tree with 6 leaves.}
  \label{fig:SackinIndecisive}
\end{figure}

Concerning future research, another interesting question arises concerning the stochastic properties of the $SNI$ and $J$ indices like the expected values under the Yule model (a simple birth model). For the cherry index, these stochastic properties have already been investigated  \cite{mckenzie_distributions_2000, choi_cherry_2020}, which makes it even more surprising that its combinatorial properties like the number of minima have never been explored before. Our manuscript closed this gap to some extent -- however, it would be interesting to find a closed formula for the number of such minima.

While our manuscript introduced the new symmetry nodes index, analyzed its extremal properties and filled some gaps from the literature concerning the $J$ and cherry indices, the most pressing question is probably the hunt for the most plausible and most useful balance index. Figures  \ref{fig:SackinIndecisive} and \ref{fig:SNIJdiff} show examples which highlight the differences between $SNI$ and other balance indices. Further studies are required to find out for which models and for which data sets $SNI$ might be more suitable than other indices. 

\section*{Acknowledgments}
We wish to thank Kristina Wicke and Luise K\"uhn for helpful discussions. We also thank the joint research project \textbf{\textit{DIG-IT!}} supported by the European Social Fund (ESF), reference: ESF/14-BM-A55-0017/19,
and the Ministry of Education, Science and Culture of Mecklenburg-Vorpommern, Germany. Last but not least, we wish to thank an anonymous reviewer for very helpful comments on an earlier version of this manuscript.




\section*{Appendix}
\subsection*{Additional proofs}
\setcounter{section}{3}
\setcounter{theorem}{4}
\begin{lemma} 
Let $n,n_a,n_b\in \mathbb{N}$ such that $n=n_a+n_b$. Then we have $wt(n)\leq wt(n_a)+wt(n_b)$. 
\end{lemma}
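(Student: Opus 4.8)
The plan is to prove the inequality by strong induction on $n = n_a + n_b$, exploiting the elementary digit recursion for the binary weight, namely $wt(2m) = wt(m)$ and $wt(2m+1) = wt(m) + 1$ (with the convention $wt(0) = 0$), which records exactly how the weight changes when the least significant bit is stripped off. Reading the target inequality $wt(n) \le wt(n_a) + wt(n_b)$ through this recursion turns it into a statement about carries in the binary addition of $n_a$ and $n_b$: each carry is precisely what makes the left-hand side strictly smaller than the right.

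For the inductive step I would split into cases according to the parities of $n_a$ and $n_b$ (three cases up to symmetry). If at least one of them is even --- say $n_a = 2a$ and $n_b = 2b$, or $n_a = 2a$ and $n_b = 2b+1$ --- then no carry is generated out of the last bit, $n$ inherits that last bit directly, and the claim reduces immediately to $wt(a+b) \le wt(a) + wt(b)$, which is just the inductive hypothesis applied to $a + b < n$. These cases are routine bookkeeping.

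The crux --- and the only place where anything genuinely happens --- is the case $n_a = 2a+1$, $n_b = 2b+1$, where a carry occurs: here $n = 2(a+b+1)$, so $wt(n) = wt(a+b+1)$, while $wt(n_a) + wt(n_b) = wt(a) + wt(b) + 2$. To control the effect of the carry I would first establish the auxiliary bound $wt(m+1) \le wt(m) + 1$ (a one-line induction of the same flavour: it is an equality when $m$ is even, and follows from the recursion when $m$ is odd). Chaining this with the inductive hypothesis gives $wt(a+b+1) \le wt(a+b) + 1 \le wt(a) + wt(b) + 1 \le wt(n_a) + wt(n_b)$, which closes the induction and even exposes a slack of one, reflecting the weight lost to the carry. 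The main obstacle is thus isolating and bounding this single carry; everything else follows mechanically.

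Alternatively, I could bypass the induction entirely using Legendre's formula $v_2(m!) = m - wt(m)$: the claim is then equivalent to $v_2(n_a!) + v_2(n_b!) \le v_2((n_a+n_b)!)$, i.e.\ to the assertion that the binomial coefficient $\binom{n_a + n_b}{n_a}$ is an integer, which is immediate. This is Kummer's theorem in disguise (the carry count equals $v_2$ of the binomial coefficient), so both routes ultimately rest on the same carry phenomenon; I would present the induction as the self-contained argument and merely remark on this cleaner alternative.
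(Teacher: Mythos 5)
Your proof is correct, but it takes a genuinely different route from the paper's. The paper argues non-inductively and globally: writing $n$, $n_a$, $n_b$ in binary with digits $n_i$, $n_i^a$, $n_i^b$ and introducing the carry $c_i$ from position $i$ to position $i+1$, it uses the digitwise identity $n_i = n_i^a + n_i^b + c_{i-1} - 2c_i$ and sums over all positions (with $c_{-1}=c_N=0$) to obtain the exact formula $wt(n) = wt(n_a) + wt(n_b) - \sum_i c_i$, from which the inequality is immediate. Your strong induction via the recursion $wt(2m)=wt(m)$, $wt(2m+1)=wt(m)+1$, with the parity case split and the auxiliary bound $wt(m+1)\leq wt(m)+1$, processes the same carry phenomenon one bit at a time rather than all at once, and every step checks out: the odd--odd case is handled correctly, the auxiliary bound has the one-line inductive proof you sketch, and the convention $wt(0)=0$ legitimately covers the degenerate subcases (e.g. $n_a=1$, where the inductive hypothesis is applied with a zero argument and holds with equality). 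What the paper's global identity buys beyond the bare inequality is the equality characterization for free: $wt(n)=wt(n_a)+wt(n_b)$ if and only if $\sum_i c_i = 0$, i.e. if and only if the binary expansions of $n_a$ and $n_b$ have no $1$ in a common position --- a fact the paper explicitly invokes in the proof of Theorem \ref{the:min_sym} to conclude that the fully balanced subtrees $T_i^{fb}$ distribute between $T_a$ and $T_b$ without repetition; extracting this from your induction is possible but would require additionally tracking where each case yields strict inequality. What your route buys is a fully self-contained elementary argument with no carry bookkeeping, and your Legendre/Kummer aside is also correct: since $v_2(m!)=m-wt(m)$, the deficit equals $wt(n_a)+wt(n_b)-wt(n)=v_2\binom{n}{n_a}\geq 0$, which is exactly the paper's carry count $\sum_i c_i$ in disguise.
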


\begin{proof}[Proof of Lemma \ref{lem_wt}] 
Let $\sum\limits_{i=0}^{N}n_i 2^i$, $\sum\limits_{i=0}^{N_a}n_i^a 2^i=\sum\limits_{i=0}^{N}n_i^a 2^i$ and $\sum\limits_{i=0}^{N_b}n_i^b 2^i=\sum\limits_{i=0}^{N}n_i^b 2^i$ be the binary expansions of $n$, $n_a$ and $n_b$, respectively. Note that $N \geq N_a,N_b$, so we can fill up the binary expansions of $n_a$ and $n_b$ with leading 0's without changing the numbers or their weight, which shows why we can replace $N_a$ and $N_b$ by $N$ in the sums.

Then, we have $wt(n)=\sum\limits_{i=0}^N n_i$, $wt(n_a)=\sum\limits_{i=0}^N n_i^a $ and $wt(n_b)=\sum\limits_{i=0}^N n_i^b$. Now we want to express every $n_i$ in terms of $n_i^a$ and $n_i^b$. In this regard, we define $c_i$ to be the carry from position $i$ to position $i+1$. 

Then, we have $n_i=n_i^a+n_i^b+c_{i-1}-2\cdot c_i$ for all $i=0,\ldots,N$. This immediately leads to:

\begin{align*}
wt(n)=\sum\limits_{i=0}^N n_i&= \sum\limits_{i=0}^N (n_i^a+n_i^b+c_{i-1}-2\cdot c_i) \\
&=\sum\limits_{i=0}^N n_i^a + \sum\limits_{i=0}^N n_i^b + \sum\limits_{i=0}^N c_{i-1}  + \sum\limits_{i=0}^N (-2)c_i \\
&\stackrel{*}{=}
wt(n_a)+wt(n_b)+\sum\limits_{i=0}^N c_{i}  + \sum\limits_{i=0}^N (-2)c_i \\
&=wt(n_a)+wt(n_b)-\underbrace{\sum\limits_{i=0}^N c_i}_{\geq 0}  \\
&\leq wt(n_a)+wt(n_b).
\end{align*}

Note that the equality marked with * is due to the fact that $c_{-1}=c_N=0$ ($c_{-1}=0$ as $n_a$, $n_b$ and $n$ are all in $\mathbb{N}$, and $c_N=0$ as otherwise $2^N$ would not be the maximum summand in the binary expansion of $n$), which leads to $\sum\limits_{i=0}^N c_{i-1}=\sum\limits_{i=0}^{N-1} c_{i}=\sum\limits_{i=0}^N c_{i}$. Last, note that by the above reasoning, $wt(n)=wt(n_a)+wt(n_b)$ if and only if $\sum\limits_{i=0}^N c_i=0$, i.e. if and only if there are no carries.

This completes the proof.
\end{proof}

\subsection*{Complexity of calculating \textit{SNI}}

The main goal of this subsection is to prove Theorem \ref{thm:linearTime} by providing an explicit algorithm calculating $SNI$ for a given rooted binary tree $T$.
\setcounter{section}{3}
\setcounter{theorem}{0}

\begin{theorem}
Let $n \in \mathbb{N}$ and let $T$ be a rooted binary tree with $n$ leaves. Then, $SNI(T)$ can be calculated in  $\mathcal{O}(n)$ time using $\mathcal{O}(n)$ memory space.
\end{theorem}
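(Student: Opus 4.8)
The plan is to reduce the computation of $SNI(T)$ to counting the number $s(T)$ of symmetry nodes, since by definition $SNI(T) = (n-1) - s(T)$, and both the constant $n-1$ and a single pass over the $n-1$ inner nodes are trivially achievable in $\mathcal{O}(n)$ time and space. Thus the real task is to decide, for every inner node simultaneously, whether its two maximal pendant subtrees are isomorphic, and to do so within a linear total budget rather than running a separate isomorphism test per node (which would be far too slow).

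First I would compute, in a single bottom-up traversal, a canonical integer code for every subtree rooted at a node of $T$, with the property that two subtrees receive the same code if and only if they are isomorphic. This is exactly what the linear-time tree-canonization algorithm of Colbourn and Booth \cite{colbourn_linear_1981} (a refinement of the classical Aho--Hopcroft--Ullman scheme) provides. Concretely, I would group the nodes by their level (distance from the root), assign every leaf the same code, and then process the levels from the deepest upward: at an inner node, whose two children already carry codes because the tree is binary and they lie one level below, I form the \emph{sorted} pair of the two children's codes (sorting the pair accounts for the fact that swapping the two pendant subtrees does not change the shape). The distinct sorted pairs occurring at a given level are then canonicalized by a radix/bucket sort and mapped to fresh consecutive integers, which become the codes of the corresponding inner nodes.

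Once all codes are available, computing $s(T)$ is immediate: I iterate over the inner nodes and, in the spirit of the decomposition underlying Lemma \ref{lem:recursion}, increment a counter whenever the two children of a node carry \emph{equal} codes, since by construction equal codes certify isomorphic pendant subtrees and hence a symmetry node. The final output is $(n-1)$ minus this count. All auxiliary structures -- the per-node codes, the level buckets, and the sorting arrays -- occupy $\mathcal{O}(n)$ space, so the memory bound is clear.

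The main obstacle, and the only non-routine point, is ensuring that the canonicalization is genuinely linear rather than $\mathcal{O}(n \log n)$. Sorting the pairs of child codes independently at each level with a comparison sort would overshoot the budget; instead one must exploit that the codes appearing at each level are integers bounded by the number of nodes processed so far, so that counting/radix sort can canonicalize them in time proportional to the number of nodes on that level plus the code range. The careful amortization showing that these per-level costs telescope to $\mathcal{O}(n)$ overall is precisely the content of the Colbourn--Booth analysis, and I would invoke it directly; the remaining bookkeeping to extract symmetry nodes from the codes is a trivial additional linear pass. A fully explicit version of this algorithm is what I would present to complete the proof.
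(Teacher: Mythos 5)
Your proposal is correct and follows essentially the same route as the paper's appendix proof: depthwise bottom-up canonical codes (the paper's \emph{i-numbers}) formed as sorted pairs of the children's codes, canonicalized level by level via bucket sort in the manner of Colbourn and Booth \cite{colbourn_linear_1981}, followed by a final linear pass counting inner nodes whose two children carry equal codes. The only cosmetic difference is that your blanket claim that equal codes characterize isomorphic subtrees should be restricted to roots of equal depth -- as the paper itself notes, the identifiers are only depthwise unique -- but this does not affect your argument, since you only ever compare the codes of siblings, which always share a depth.
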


However, before we can proceed to prove Theorem \ref{thm:linearTime}, we first have to introduce the concept of \textit{i-numbers}, which are identifiers assigned to each node such that a node is a symmetry node if and only if both its children have the same \textit{i-number}. Algorithms assigning these \textit{i-numbers} are already known; for instance, the algorithm by Colbourn and Booth, which explores tree isomorphisms on two separate trees (or a forest) in linear time \cite{colbourn_linear_1981}. For our purposes we will use a simplified version of Colbourn and Booth's algorithm that is specialized for the evaluation of a single rooted binary tree. 

Now, we will define our algorithm, which calculates the \textit{i-numbers} for all nodes of the same depth by sorting all  pairs of their children's \textit{i-numbers} lexicographically.

\vspace{0.5 cm}
{\small
\begin{algorithm}[H]
\caption{Variation of Colbourn and Booth's algorithm, optimized for  application to a single rooted binary tree}\label{alg:variationColbournBooth}
 \SetKwFunction{unique}{unique}\SetKwFunction{max}{max}\SetKwFunction{bucket}{lexicoSort}
 \SetKwFunction{sortAsc}{sortAsc}\SetKwFunction{whichEmpty}{whichEmpty}
 \SetKwData{Descs}{Descendants}\SetKwData{Edges}{Edges}\SetKwData{inumbers}{i-numbers}
 \SetKwData{n}{$n$}
 \SetKwData{ranking}{ranking}\SetKwData{depth}{depth}\SetKwData{wl}{workLabels}
 \SetKwData{source}{source}\SetKwData{target}{target}\SetKwData{root}{root}
 \SetKwData{lastAdded}{nodesToAdd}\SetKwData{curDepth}{currentDepth}
 \SetKwData{nodesOfDepth}{nodesOfDepth}\SetKwData{curNodes}{currNodes}
 \SetKwData{Ancs}{Ancestors}
 \SetKwInOut{Input}{Input}\SetKwInOut{Output}{Output}
 \vspace{0.15 cm}
 \Input{number of leaves \n, nodes distinctly enumerated with 1,...,2\n-1, edge matrix \Edges (size (2\n-2) $\times$ 2)}
 \Output{\wl (size (2\n-1) $\times$ 2) and \\ \inumbers (size (2\n-1) $\times$ 1) for all nodes} 
 \emph{Create matrix \Descs \textnormal{(size (2\n-1) $\times$ 2)} containing the descendants of each node (empty for leaves) as well as a vector \Ancs \textnormal{(size (2\n-1) $\times$ 1)} containing the ancestor of each node (empty for root).}\\
 \For {$i\leftarrow 1$ \KwTo $(2$\n$-2)$}{
    (\source, \target) $\leftarrow$ edge $i$ of \Edges\;
 	add \target to \Descs[\source] \;
 	\Ancs[\target] $\leftarrow$ \source\;
 }
 \textbf{Phase 1:} 
 \emph{Sort nodes top-down by their depths.}\\
 initialize \nodesOfDepth (size in $O$(\n)) \;
 \root $\leftarrow$ \whichEmpty(\Ancs) \tcp*{\root has no ancestors}
 initialize \lastAdded $\leftarrow$ \root and \curDepth $\leftarrow 0$\;
 \While {\lastAdded $\neq \emptyset$}{
    add \lastAdded to \nodesOfDepth[\curDepth]\;
    update \lastAdded $\leftarrow$ \Descs[\lastAdded]\;
    increment \curDepth\;
 }
 \textbf{Phase 2:}
 \emph{Calculate \wl and \inumbers bottom-up.}\\
 \For {\depth $\leftarrow$ $($\curDepth$-1)$ \KwTo $0$}{
    \curNodes $\leftarrow$ \nodesOfDepth(\depth)\;
 	\For {$v$ in \curNodes}{
 	    \eIf{\Descs$[v] = \emptyset$ \tcp*{leaves have no descs.}}{
            \wl[$v$] $\leftarrow$ (0,0)\;
        }{
            \wl[$v$] $\leftarrow$ \sortAsc(\inumbers[\Descs[$v$]])\;
        }
    }
    \inumbers(\curNodes) $\leftarrow$ \bucket(\wl$[$\curNodes$]$)) \;
 }
\end{algorithm}
}
\vspace{0.3 cm}

\begin{figure}
	\centering
	\includegraphics[width=\textwidth]{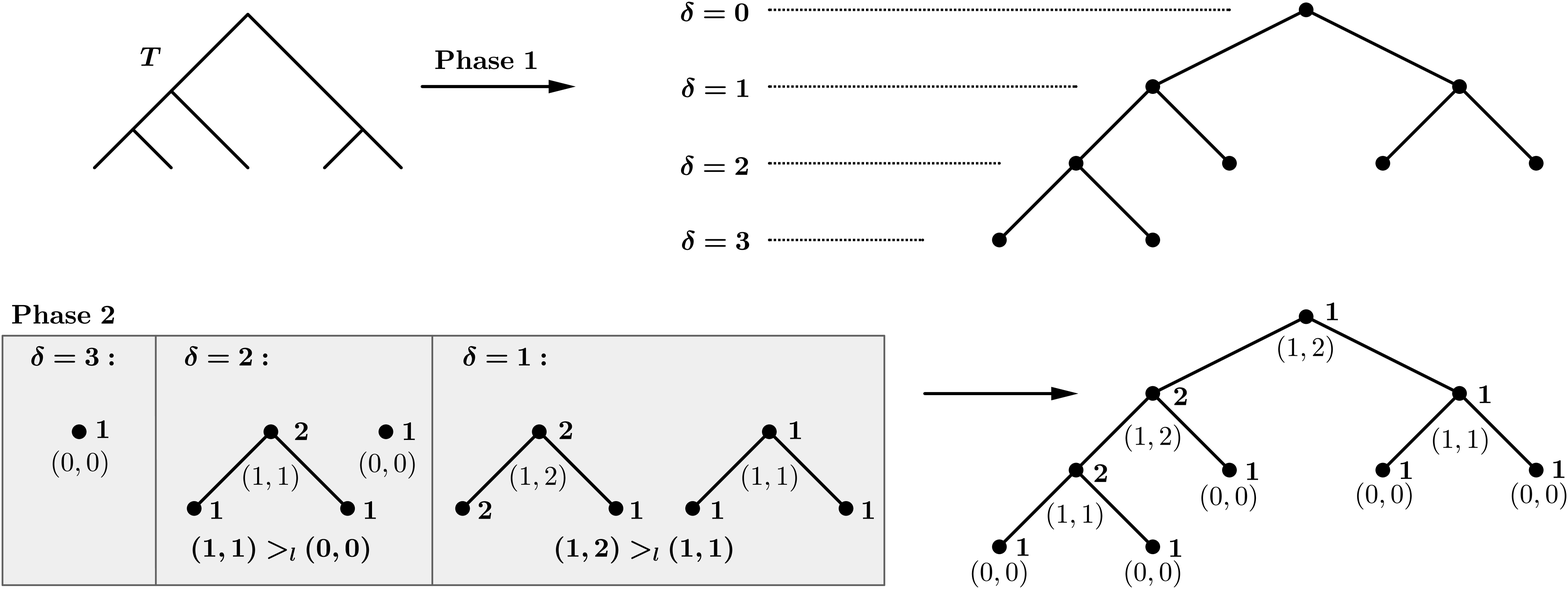}
	\caption{The procedure of Algorithm \ref{alg:variationColbournBooth} for a rooted binary tree with $n=5$ leaves.}
	\label{fig:Algorithm_small}
\end{figure}

\begin{figure}
	\centering
	\includegraphics[width=0.8\textwidth]{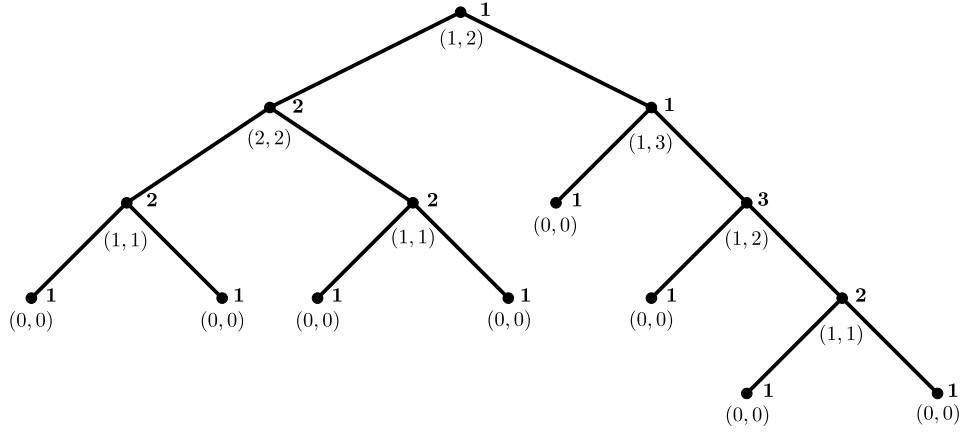}
	\caption{The depicted tree has $n=8$ leaves and a maximal depth of $4$. For each depth, starting from the bottom, the working labels $wl_v$ are assigned ($(i_{v_1}, i_{v_2})$ under each node) and then the \textit{i-numbers} ($i_v$ in bold on the right of each node) are determined by sorting the working labels lexicographically.}
	\label{fig:Algorithm}
\end{figure}

An example showing the procedure in detail can be found in Figure \ref{fig:Algorithm_small} and a larger example is shown in Figure \ref{fig:Algorithm}. The working labels in Phase 2 of this algorithm are ranked by sorting the list of working labels lexicographically, i.e. first by their first entry and then by their second entry (both in ascending order), such that equal working labels obtain the same rank. For instance, consider the working labels $(1,3),(1,3),(2,2)$, then these are assigned the ranks $1$, $1$ and $2$ because the lexicographically sorted list is $(1,3)=(1,3)<(2,2)$. This is realized using a bucket sort algorithm, which takes linear time \cite{knuth3}. 

Note that this algorithm (as well as the original) is made to only compare nodes of the same depth because the identifiers are only depthwise unique. For assessing if a node is a symmetry node this is sufficient since the two children of a node always have the same depth. However, before we can  prove this rigorously, we first note that in a rooted binary tree $T$, two vertices at the same depth are assigned the same $i$-number by Algorithm  \ref{alg:variationColbournBooth} if and only if they are assigned identical working labels (this follows directly from the lexicographic ranking). 

Now we are in a position to state the following proposition, which links the concept of $i$-numbers to the concept of symmetry nodes.

\begin{proposition}\label{prop:ilabels}
Let $n \in \mathbb{N}$ and let $T$ be a rooted binary tree with $n$ leaves, and let $T_a$ and $T_b$ be two subtrees of $T$ rooted at $a$ and $b$ ($a$, $b \in V(T)$) such that $a$ and $b$ have the same depth $\delta$. Then, $T_a$ and $T_b$ are isomorphic if and only if $a$ and $b$ are assigned the same $i$-number by Algorithm \ref{alg:variationColbournBooth}, i.e. $T_a \cong T_b \Leftrightarrow i(a)=i(b)$.
\end{proposition}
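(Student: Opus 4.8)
The plan is to prove the equivalence by induction on the height $h$ of the subtree $T_a$, which matches the bottom-up order in which Phase 2 of Algorithm \ref{alg:variationColbournBooth} assigns working labels and $i$-numbers. Throughout, I would lean on the observation already recorded just before the statement, namely that two nodes at the same depth receive the same $i$-number if and only if they are assigned identical working labels; this lets me replace every claim about $i$-numbers by the corresponding claim about working labels. At the outset I would also note that the ranks produced by the lexicographic bucket sort are \emph{positive} integers, so the working label of any internal node is a pair of positive entries and can therefore never coincide with the leaf working label $(0,0)$. This single remark is what forces a leaf and an internal node at a common depth to receive distinct $i$-numbers, mirroring the fact that a one-leaf tree is never isomorphic to a tree of height at least $1$.

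For the base case $h=0$, the node $a$ is a leaf with working label $(0,0)$. If $T_a \cong T_b$, then $T_b$ is also a single leaf, so $b$ carries the same working label and $i(a)=i(b)$; conversely, $i(a)=i(b)$ forces $b$ to have working label $(0,0)$, which by the positivity remark can only occur if $b$ is a leaf, whence $T_b \cong T_a$. For the inductive step $h\geq 1$, $a$ is internal with children $a_1,a_2$ at depth $\delta+1$, and its working label is $\mathrm{sort}(i(a_1),i(a_2))$ produced by the ascending sort \texttt{sortAsc}. In the forward direction I would take an isomorphism $f\colon T_a \to T_b$, observe that it sends the root $a$ to the root $b$ (so $b$ is internal) and carries the unordered pair $\{T_{a_1},T_{a_2}\}$ onto $\{T_{b_1},T_{b_2}\}$; applying the induction hypothesis to the matched children, all of which lie at depth $\delta+1$, yields the multiset identity $\{i(a_1),i(a_2)\}=\{i(b_1),i(b_2)\}$, so after sorting the working labels agree and $i(a)=i(b)$.

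For the converse in the inductive step, $i(a)=i(b)$ gives equal working labels; positivity again forces $b$ to be internal, and equality of the sorted pairs produces a matching of the children of $a$ with those of $b$ having equal $i$-numbers. Since all four children sit at depth $\delta+1$, the induction hypothesis supplies isomorphisms of the corresponding pendant subtrees, and gluing these together with the root-to-root correspondence assembles the required isomorphism $T_a \cong T_b$.

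The main obstacle I anticipate is bookkeeping rather than anything deep. The two genuinely delicate points are: (i) the children of a node are \emph{unordered}, so the comparison must be phrased in terms of the multiset $\{i(a_1),i(a_2)\}$ and must rely on \texttt{sortAsc} to canonicalize that multiset before ranking; and (ii) the induction hypothesis is only meaningful when the two nodes being compared share a depth, since $i$-numbers are merely depthwise unique — this is precisely why the statement is restricted to nodes of equal depth, and the restriction propagates cleanly because the children always inherit the common depth $\delta+1$. Confirming the positivity of the ranks, which keeps leaves and internal nodes from ever colliding, is the final ingredient that makes both directions of the leaf-versus-internal dichotomy airtight.
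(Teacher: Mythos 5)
Your proof is correct and takes essentially the same route as the paper's: the paper packages the identical bottom-up argument as a proof by contradiction, picking a counterexample at maximal depth $\delta$ and invoking the validity of the claim at depth $\delta+1$, which is exactly your inductive step on subtree height, with the same leaf-versus-internal dichotomy via the label $(0,0)$ and the same matching of the sorted children $i$-numbers in both directions. Your explicit remark that the lexicographic ranks are positive (so $(0,0)$ is reserved for leaves) merely makes precise a point the paper uses implicitly.
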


\begin{proof} 

Assume the assertion does not hold, i.e. assume that there is a tree $T$ for which the statement is wrong. Consider this tree and consider the maximal depth $\delta$ in $T$ for which the assertion fails. This implies one of the two statements must hold: \begin{enumerate}
    \item $i(a)=i(b)$ but $T_a\not\cong T_b$, or
    \item $T_a \cong T_b$ but $i(a)\neq i(b)$.
\end{enumerate} 

\begin{enumerate}
    \item In this case, as $i(a)=i(b)$, we know that the working labels $wl$ of $a$ and $b$ are equal (as $a$ and $b$ are at the same depth, namely $\delta$, and Algorithm \ref{alg:variationColbournBooth} assigns the same $i$-numbers within one depth if and only if the working labels are identical). So we have $wl(a)=wl(b)$. If $wl(a)=wl(b)=(0,0)$, then $a$ and $b$ are both leaves. But in this case, we would have $T_a\cong T_b$, a contradiction. So $a$ and $b$ must be inner nodes and thus $T_a$ and $T_b$ must have two maximal pendant subtrees each, say $T_a^1$ and $T_a^2$ and $T_b^1$ and $T_b^2$, respectively. Let $T_i^j$ be rooted at $i_j$ for $i\in \{a,b\}$ and $j \in \{1,2\}$. As the working labels of $a$ and $b$ are identical, at least one of the following cases must hold: 
    \begin{itemize}
        \item $i(a_1)=i(b_1)$ and $i(a_2)=i(b_2)$, or
        \item $i(a_1)=i(b_2)$ and $i(a_2)=i(b_1)$.
    \end{itemize}
    Without loss of generality we assume the first case (else swap the roles of $b_1$ and $b_2$). However, as $a_1$, $a_2$, $b_1$ and $b_2$ all have depth $\delta+1>\delta$, it follows (as $\delta$ was chosen to be the maximal depth in $T$ providing a counterexample) that $T_a^1\cong T_b^1$ and $T_a^2 \cong T_b^2$. However, this immediately implies $T_a \cong T_b$, a contradiction to the assumption.
    \item Note that if both $a$ and $b$ are leaves, we would have $i(a)=i(b)=1$, a contradiction. So at least one of them cannot be a leaf and must have two maximal pendant subtrees -- but as $T_a \cong T_b$, this holds for the second tree, too. Let $T_a^1$, $T_a^2$, $T_b^1$ and $T_b^2$ denote the maximal pendant subtrees rooted at the children $a_1$ and $a_2$ of $a$ and $b_1$ and $b_2$ of $b$, respectively. As $T_a\cong T_b$, we know that at least one of the following two cases must hold: 
    \begin{itemize}
        \item $T_a^1 \cong T_b^1$ and $T_a^2 \cong T_b^2$, or
        \item $T_a^1 \cong T_b^2$ and $T_a^2 \cong T_b^1$.
    \end{itemize}
    Without loss of generality we assume the first (else swap the roles of $b_1$ and $b_2$).
   However, as $a_1$, $a_2$, $b_1$ and $b_2$ all have depth $\delta+1>\delta$, it follows (as $\delta$ was chosen to be the maximal depth in $T$ providing a counterexample) that $i(a_1)=i(b_1)$ and $i(a_2)=i(b_2)$. Thus, for the working labels of $a$ and $b$ we conclude: \[wl(a)=\mathsf{SORT}\left(i(a_1),i(a_2)\right)= \mathsf{SORT}\left(i(b_1),i(b_2)\right)=wl(b),\]
    which immediately implies $i(a)=i(b)$, a contradiction. 
\end{enumerate}
As both cases lead to a contradiction, such a tree $T$ cannot exist, which proves the assertion.
\end{proof}

Proposition \ref{prop:ilabels} together with the fact that the children of a node always have the same depth, immediately leads to the following corollary. 
\begin{corollary}\label{cor_ilabelsym}
 $u$ is a symmetry node if and only if its children $v$ and $w$ have the same $i$-number, i.e. if $i(v)=i(w)$.
\end{corollary}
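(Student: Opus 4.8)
The plan is to establish Corollary \ref{cor_ilabelsym} directly from Proposition \ref{prop:ilabels} by specializing the two subtrees $T_a$ and $T_b$ to be the maximal pendant subtrees of a single inner node $u$. The key observation that makes this specialization valid is that in a rooted binary tree, the two children $v$ and $w$ of any inner node $u$ necessarily lie at the same depth: if $u$ is at depth $\delta$, then both $v$ and $w$ are at depth $\delta+1$. This is precisely the hypothesis required by Proposition \ref{prop:ilabels}, so we are free to apply it with $a=v$ and $b=w$.

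With this setup, I would first recall that $u$ is by definition a symmetry node if and only if its two maximal pendant subtrees, namely the subtrees $T_v$ and $T_w$ rooted at $v$ and $w$, are isomorphic. Thus the statement $u$ is a symmetry node is literally equivalent to $T_v \cong T_w$. Next, since $v$ and $w$ share the common depth $\delta+1$, Proposition \ref{prop:ilabels} applies verbatim to the pair $(v,w)$ and yields the equivalence $T_v \cong T_w \Leftrightarrow i(v)=i(w)$. Chaining these two equivalences gives exactly the claim: $u$ is a symmetry node $\Leftrightarrow T_v \cong T_w \Leftrightarrow i(v)=i(w)$.

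Since the proposition already carries the entire combinatorial burden (via its induction on depth), there is essentially no remaining obstacle here; the only thing to be careful about is confirming that the depth hypothesis of the proposition is met, which is immediate from the binary tree structure. I would therefore keep the argument to a couple of sentences, making explicit that the definition of a symmetry node is what connects the isomorphism of pendant subtrees to the $i$-number equality, and that the equal-depth condition is what licenses invoking Proposition \ref{prop:ilabels}.
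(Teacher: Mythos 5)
Your proposal is correct and follows exactly the paper's own route: the paper also derives Corollary \ref{cor_ilabelsym} directly from Proposition \ref{prop:ilabels}, noting that the two children of a node always share the same depth, combined with the definition of a symmetry node. Your slightly more explicit chaining of the two equivalences is just a spelled-out version of the same argument.
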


Corollary \ref{cor_ilabelsym} is crucial as it implies that when calculating $SNI$, i.e. the number of inner vertices that are not symmetry nodes, we just need to count the number of inner vertices whose children get assigned different $i$-labels by Algorithm \ref{alg:variationColbournBooth}. This counting is done by Algorithm \ref{alg:sni}.

\vspace{0.5 cm}
\begin{algorithm}[H]
\caption{Calculating $SNI(T)$ with results from Algorithm \ref{alg:variationColbournBooth}}\label{alg:sni}
 \SetKwFunction{return}{return}
 \SetKwData{Descs}{Descendants}\SetKwData{inumbers}{i-numbers}
 \SetKwData{wl}{workLabels}\SetKwData{sni}{SNI}\SetKwData{n}{$n$}
 \SetKwInOut{Input}{Input}\SetKwInOut{Output}{Output}
 \vspace{0.15 cm}
 \Input{\wl (size $(2$\n$-1)) \times 2$}
 \Output{\sni (integer)}
 \emph{Count interior nodes that have children with different \textit{i-numbers}.}\\
 \For {$i$ $\leftarrow 1$ \KwTo $2\n-1$}{
    \If{\wl$[i] \neq (0,0)$\tcp*{exclude leaves}}{
        \If{\wl$[i]_1\neq$\wl$[i]_2$}{
           increment \sni\; 
        }
    }
}
\end{algorithm}
\vspace{0.3 cm}

Now we are finally in the position to prove Theorem \ref{thm:linearTime}. 

\begin{proof} [Proof of Theorem \ref{thm:linearTime}] Due to Corollary \ref{cor_ilabelsym}, it suffices to show that the combined run time of Algorithms \ref{alg:variationColbournBooth} and \ref{alg:sni} is linear. We first consider Algorithm  \ref{alg:variationColbournBooth}.  The preparation phase with a single $\mathsf{FOR}$ loop is in $O(2n-2)=O(n)$. Similarly, the nodes at depth $\delta$ can be calculated in linear time because the while loop runs only until all $2n-1$ nodes have been added (constant number of operations per node). Assigning working labels and \textit{i-numbers} can be done in linear time as well using a bucket sort algorithm to obtain the lexicographic ranking of the working labels \cite{colbourn_linear_1981}.
The symmetry nodes index can then be calculated as described in Algorithm \ref{alg:sni} with a simple $\mathsf{FOR}$ loop and is therefore computable in $O(2n-1)=O(n)$. Thus, the total computation time of the index is in $O(n)$. Furthermore, all stored variables have a size in $O(n)$. Thus, the total memory usage is in $O(n)$, as well. This completes the proof.
\end{proof}

\subsection*{Introduction to the \textsf{R} package \textsf{symmeTree}}
 \textsf{symmeTree} is a package implemented in the free statistical programming language \textsf{R} \cite{RCoreTeam2019} and has been made available on Github \cite{symmeTree}. It provides all crucial functions defined in this manuscript including the three balance indices \texttt{symNodesIndex}, \texttt{modCherryIndex} and \texttt{rogersJ} (all three computed in linear time), the function \texttt{binWeight} that returns the binary weight of a natural number as well as some auxiliary functions.
The implementations of the balance indices all assume rooted binary trees as input in the \textsf{phylo} format \cite{paradis_definition_2012}. This format is used to store (phylogenetic) trees with no vertices with in-degree larger than one or out-degree smaller than two. A \textsf{phylo} object consists of a list containing three mandatory elements: a numeric matrix \textsf{edge} with two columns and the rows each representing an edge, a character vector \textsf{tip.label} of length $n$ with the leaf labels as well as an integer value \textsf{Nnode} giving the number of interior nodes. A nice feature of our \textsf{symmeTree} package is that it does not require the specific node enumeration described in \cite{paradis_definition_2012} as long as each node is represented by a unique number in $\{1,...,2n-1\}$. This can simplify the application of our functions when experimenting with trees constructed without these strict requirements.

When using the package \textsf{devtools} \cite{devtools} the \textsf{symmeTree} package can be installed directly from Github using the following command:\\
\indent\texttt{devtools::install\_github("SophieKersting/symmeTree")}\\
You can run the following commands that calculate $SNI(T_4^{cat})=2$ as an example. The last command is optional for plotting the tree. If you wish to use it, the \textsf{ape} package \cite{paradis_ape_2018} has to be installed first, which \emph{does} require the specific node enumeration described in \cite{paradis_definition_2012}.

\indent\texttt{mat <- cbind(c(7,7,6,5,5,6),c(1,2,3,4,6,7))} \\
\indent\texttt{tree <- list(edge=mat, tip.label=c("","","",""), Nnode=3)}\\
\indent\texttt{attr(tree, "class") <- "phylo"}\\
\indent\texttt{symNodesIndex(tree)}\\
\indent\texttt{ape::plot.phylo(tree, type = "cladogram", }\\
\indent\phantom{\texttt{ape::plot.phylo(}}\texttt{direction = "downwards")}



\bibliographystyle{unsrt}  

\bibliography{PaperCherrySymNodes2}

\end{document}